\documentclass[runningheads]{llncs}
\usepackage[T1]{fontenc}

\usepackage{hyperref}

\usepackage{xspace}
\usepackage{amsmath}
\usepackage{amsfonts}
\usepackage{mathtools}
\usepackage{amssymb}
\usepackage{cleveref}
\usepackage[inline]{enumitem}

\usepackage{tabularx}
\usepackage{booktabs}
\usepackage{multirow}
\usepackage{a4wide}

\usepackage{tikz}
\usetikzlibrary{calc}

\newcommand{\ARH}{{\normalfont\textsc{Anonymous Refugee Housing with Upper-Bounds}}\xspace}
\newcommand{\ARHshort}{{\normalfont\textsc{ARH-UB}}\xspace}
\newcommand{\RARH}{{\normalfont\textsc{Relaxed \ARHshort}}\xspace}
\newcommand{\RARHshort}{{\normalfont\textsc{RARH-UB}}\xspace}

\newcommand{\R}{\ensuremath{R}\xspace}
\newcommand{\I}{\ensuremath{I}\xspace}
\newcommand{\ub}{\ensuremath{\operatorname{ub}}}
\newcommand{\assgn}{\ensuremath{\iota}}
\newcommand{\plc}{\ensuremath{\pi}}

\newcommand{\N}{\ensuremath{\mathbb{N}}}

\newcommand{\tw}{\ensuremath{\operatorname{tw}}}
\newcommand{\fes}{\ensuremath{\operatorname{fes}}}

\newcommand{\Oh}[1]{\ensuremath{{\mathcal{O}\left(#1\right)}}}

\newcommand{\NP}{\textsf{NP}\xspace}
\newcommand{\NPh}{\NP-hard\xspace}
\newcommand{\NPc}{\NP-complete\xspace}
\newcommand{\FPT}{\textsf{FPT}\xspace}
\newcommand{\XP}{\textsf{XP}\xspace}
\newcommand{\W}[1][1]{\textsf{W[#1]}\xspace}
\newcommand{\Wh}[1][1]{\W[#1]-hard\xspace}

\newcommand{\Whness}[1][1]{\W[#1]-hardness\xspace}
\newcommand{\pNPh}{para-\NP-hard\xspace}

\newcommand{\No}{\emph{no}\xspace}
\newcommand{\Yes}{\emph{yes}\xspace}

\newcommand{\YesI}{\Yes-instance\xspace}

\AtEndEnvironment{proof}{\phantom{}\qed} %
\newtheorem{claim}{Claim}
\crefname{claim}{Claim}{Claims}
\newenvironment{claimproof}[1][Proof]{\paragraph{#1.}}{\hfill$\blacktriangleleft$\\\medskip}
\begin{document}
\title{Towards More Realistic Models for Refugee Integration: Anonymous Refugee Housing with Upper-Bounds}
\titlerunning{Anonymous Refugee Housing with Upper-Bounds}
\author{Šimon Schierreich\orcidID{0000-0001-8901-1942}}%
\authorrunning{Šimon Schierreich}
\institute{Czech Technical University in Prague, Prague, Czechia\\
\email{schiesim@fit.cvut.cz}}
\maketitle              %
\begin{abstract}
	[Knop and Schierreich; AAMAS~'23] recently introduced a novel model for refugee housing, where we are given a city's topology represented as a graph, a set of inhabitants who occupy some vertices of the topology, and a set of refugees that we need to house in unoccupied vertices. Moreover, each inhabitant approves specific numbers of refugees in his neighbourhood, and our goal is to find housing such that every inhabitant approves the number of refugees housed in his neighbourhood. However, such a model suffers from several problems: (i) it is computationally hard to find desirable housing, and (ii) the inhabitants' preferences are not necessarily continuous, which hardly occurs in practice. To avoid these objections, we introduce a restricted variant of the problem that is more closely aligned with practical scenarios of refugee housing and study its complexity.

    \keywords{Refugee Housing \and Matching \and Stability \and Individual Rationality \and Fixed-Parameter Tractability.}
\end{abstract}

\section{Introduction}

According to the numbers of the United Nations High Commissioner for Refugees (UNHCR), in May 2022, we passed the tragic milestone of 100 million forcibly displaced people around the world~\cite{UNHCR2022}. As an absolute number may not be so telling, an easy calculation gives us that 1 in every 78 people on earth has been forced to leave their home. Among the most common causes are armed and political conflicts, food insecurity, and natural disasters that go hand in hand with climate change~\cite{FischerSK2021,UNHCR2022}. Furthermore, UNHCR reports~\cite{UNHCR2022} that 83\% of all forcibly displaced persons are hosted in low- and middle-income countries. Furthermore, around 69\% of the refugees originate from only five countries and, at the same time, 72\% of them are hosted in countries neighbouring the country of origin. This clearly indicates that the burden of refugees is distributed very unevenly and unfairly~\cite{Schuck1997,NRC2022}.

Problems with refugee redistribution and housing have received the attention of computer scientists and computational social choice community in particular even very recently. The first results in this direction were given in 2016 in the technical report by Delacrétaz et al. (the report was recently published in \cite{DelacretazKT2023}) who proposed a novel model of \emph{refugee redistribution} based on double-sided matching markets. In this model, we are given a set of locations together with multidimensional constraints such as the number of refugees that can be accepted. In addition, we have a set of refugees with multidimensional features, such as family size. This model is very suitable for \emph{global phase of refugee resettlement}, when incoming refugees need to be distributed between different communities to decrease the burden of the most affected ones. This line of research was later followed by several papers~\cite{AhaniGPTT2021,AndersonE2020}. The most important works for us are the papers of Aziz et al.~\cite{AzizCGS2018} and Kuckuck et al.~\cite{KuckuckRW2019} that investigate stability, Pareto optimality, and associated computational issues. Bansak et al.~\cite{BansakFHDHLW2018} and Ahani et al.~\cite{AhaniAMTT2021} also investigated ML techniques that can significantly improve the performance of existing redistribution mechanisms.

Once refugees are redistributed using, for example, the refugee redistribution mechanism of \cite{DelacretazKT2023}, it remains to assign incoming refugees to empty houses of the host community. Real-world examples~\cite{AgerS2008,OECD2018,ZierschDW2023} clearly show that housing incoming refugees so that residents do not feel threatened by newcomers and so that refugees are ``happy'' about their new homes is one of the keys to successful integration. For this purpose, Knop and Schierreich~\cite{KnopS2023} recently introduced a computational model for successful \emph{local phase of refugee resettlement}. The simplest setting they propose is called \emph{anonymous refugee housing} and is defined as follows: The city is represented as an undirected graph (called \emph{topology}) with the vertices of this graph being houses and the edges representing the neighbour relationship between houses. Then some houses are occupied by inhabitants who approve some numbers of refugees in their neighbourhood. Finally, there is a set of refugees, and the goal is to assign refugees to empty houses of the topology so that each inhabitant approves the number of refugees in his or her neighbourhood. In this initial work, the authors focus mainly on computational issues related to the existence of such \emph{inhabitants-respecting housings}, which can be seen as a refinement of the well-known \emph{individual rationality} for refugee housing.

However, the model of Knop and Schierreich~\cite{KnopS2023} described above is not without limitations, which raise valid concerns from several angles. First, in the setting with anonymous preferences, refugees' opinions about housing are not taken into account at all. This omission can have undesirable consequences, including secondary migration~\cite{AgerS2008,ZierschDW2023}, which, in turn, can hinder the successful integration of newcomers into host communities. To dispel this criticism, Knop and Schierreich~\cite{KnopS2023} also introduced two more advanced models, where even refugees have preferences. 
A more fundamental weakness lies in the assumption of approval preferences, which may not accurately reflect the realities of refugee housing. Specifically, in their model the preferences of inhabitants can be non-continuous; for example, an inhabitant may accept either $2$ or $4$ refugees but not $3$ or even no refugees to his neighbourhood. Moreover, this property was significantly exploited in the intractability results of \cite{KnopS2023}. Consequently, it is natural to question whether there exist more realistic preference models and whether these models can lead to more positive algorithmic outcomes compared to the original setting.

\paragraph{Our Contribution.} 

In this work, we introduce a variant of anonymous refugee housing with restricted preferences. Specifically, we require preferences to exhibit two key characteristics: to be continuous and to always contain zero. That is, in our model, each inhabitant declares only the upper-bound on the number of refugees in his or her neighbourhood. Such preferences bring the model of refugee housing closer to practice, and even the collection of such preferences is much better realisable in practice. It should be noted that the investigation of various preferences restrictions is widespread in computational social choice~\cite{ElkindL2015,DoucetteC2017,TerzopoulouKO2021,ChengR2022,ElkindLP2022}. In fact, our upper-bounds are a special case of candidate-interval preferences~\cite{ElkindL2015}, which are the approval counterpart of classical single-peaked preferences known from ordinal voting~\cite{Black1958} and have been extensively studied in recent years~\cite{BrillIMP2022,ElkindLZ2022,KusekBFKK2023}.

We start our study with a formal definition of the computational problem of interest. Then, we derive various properties that are of independent interest or useful in future sections. Next, we focus on the equilibrium guarantees and prove necessary conditions for its existence and show that equilibrium is not guaranteed to exist even in very simple instances.%
Motivated by the equilibrium non-existence, we then further investigate the associated computational problem that asks for the complexity of deciding an equilibrium existence. We start with a perspective of classical computational complexity and show that deciding the existence of equilibrium is computationally hard, even if the topology is a planar graph of constant degree. Next, we focus on the instances where the topology is restricted. To this end, we provide polynomial-time algorithms for trees, graphs of maximum degree at most~$2$, and complete bipartite graphs. Then, we study additional restrictions of the topology that are, in some sense, not far from the trivial polynomial-time solvable cases. In particular, we use the framework of parameterised complexity\footnote{Formal introduction to parameterised complexity is provided in \Cref{sec:preliminaries}.} to show that the problem is fixed-parameter tractable when parameterised by the feedback-edge set number, by the number of edges that need to be added to the graph to become a complete bipartite graph, or by the size of the modulator to complete bipartite graph. Additionally, we show that the problem is in \XP and is \Wh with respect to the tree-width of the topology. %
Finally, we propose a relaxation of the inhabitants-respecting housing, which shows up to be a very strong requirement that fails to exist even in very simple instances. Specifically, we are interested in finding a housing that cumulatively exceeds the agents' upper-bounds by at most $t$. This relaxation can also be seen as an approximation of the original anonymous refugee housing with upper-bounds. We again study guarantees, structural properties, and computational complexity of the associated computational problem, to which end we give several algorithmic upper-bounds and complexity lower-bounds.

\paragraph{Related Work.} The model of refugee housing is related to the famous Schelling model of racial segregation~\cite{Schelling1969,Schelling1971} and, more precisely, to its graph-theoretic refinements~\cite{ChauhanLM2018,AgarwalEGISV2021}. However, in the model of Agarwal et al.~\cite{AgarwalEGISV2021}, the preferences of the agents are based on types of agents and are implicit: the fewer agents of other types in the neighbourhood, the better. In addition, agents can strategically move around the topology to improve their position. Schelling games have received a lot of attention from the perspective of computation and complexity~\cite{EchzellFLMPSSS2019,KreiselBFN2022,BiloBLM2022b,BiloBLM2022a,BiloBDLM02023,DeligkasEG2023}.

Next, in \emph{hedonic seat arrangement}~\cite{BodlaenderHJOOZ2020,Wilczynski2023,Ceylan0R2023}, the goal is also to find an assignment of agents to the vertices of some underlying topology. However, the preferences of the agents are much more verbose; they depend on the identity of each agent. Moreover, in our setting, the inhabitants already occupy some vertices of the topology.
The assignment of agents to vertices of a topology is a main task also in multiple network games. In recently introduced \emph{topological distance games}~\cite{BullingerS2023,DeligkasEKS2024}, agents' utilities in some allocation are based on their inherent valuation of other agents and on their distances with respect to the allocation.

Related are also different classes of coalition formation games. Arguably, the most influential model in this direction are \emph{hedonic games}~\cite{DrezeG1980}, where the goal is to partition a set of agents into coalitions subject to agents' preferences that are based on the identity of other agents assigned to the same coalition. In our setting, the satisfaction of agents does not depend on the identity of agents, but on the coalition size; as in \emph{anonymous hedonic games}~\cite{Banerjee2001}. However, in both settings of hedonic games, the coalitions are pairwise disjoint, while in refugee housing the neighbourhoods are overlapping.

\section{Preliminaries}\label{sec:preliminaries}

We use~$\N$ to denote the set of positive integers. For two distinct positive integers~$i,j\in\N$, where~$i < j$, we define~$[i,j] = \{i,i+1,\ldots,j-1,j\}$ and call it an \emph{interval}, and we set~$[i] = [1,i]$ and~$[i]_0 = [0,i]$. For a set~$S$, by~$2^S$ we denote the set of all subsets of~$S$ and, given~$k\leq |S|$, we denote by~$\binom{S}{k}$ the set of all~$k$-sized subsets of~$S$.

\subsection{Graph Theory}
We follow the monograph of Diestel~\cite{Diestel2017} for basic graph-theoretical terminology. A~simple undirected graph is a pair~$G=(V,E)$, where~$V$ is a set of \emph{vertices} and~$E\subseteq \binom{V}{2}$ is a set of \emph{edges}. Given a vertex~$v\in V$, we define its \emph{(open) neighbourhood} as a set~$N_G(v) = \{ u \in V \mid \{u,v\}\in E(G)\}$ and the \emph{closed neighbourhood} as a set~$N_G[v] = N_G(v) \cup \{v\}$.
The \emph{degree} of a vertex~$v\in V$ is~$\deg(v) = |N_G(v)|$.
We extend the notion of neighbourhoods to sets as follows. Let~$S\subseteq V$ be a set. The neighbourhood of~$S$ is~$N_G(S) = \bigcup_{v\in S} N_G(v)$.%
Given a graph $G$ and a set $S\subseteq V(G)$, by $G\setminus S$ we denote the graph $(V(G)\setminus S,E(G)\cap\binom{V\setminus S}{2})$.

\subsection{Refugee Housing}
By~$\I$ we denote a set of \emph{inhabitants} and by~$\R\in\N$ we denote a number of \emph{refugees}. A \emph{topology} is a simple undirected graph~$G=(V,E)$ such that~$|V| \geq |\I| + \R$. An \emph{inhabitants assignment} is an injective function~$\assgn\colon \I\to V$ and by~$V_\I$ we denote the set of all vertices occupied by the inhabitants, that is,~$V_\I = \{v\in V\mid \exists i\in\I \colon \assgn(i) = v\}$. The set of \emph{unoccupied} or \emph{empty} vertices is~$V_U = V \setminus V_\I$.
The preferences of the inhabitants are expressed in the maximum number of refugees they allow to be housed in their neighbourhoods. Formally, we are given an \emph{upper-bound}~$\ub(i)\in [\deg(\assgn(i))]_0$ for every inhabitant~$i\in\I$.
A \emph{housing} is a subset~$\pi\subseteq V_U$ and we say that $\pi$ is \emph{inhabitants-respecting}, if we have~$N_G(\assgn(i))\cap \pi \leq \ub(i)$ for every~${i\in\I}$.

\subsection{Parameterised Complexity}
In parameterised algorithmics~\cite{Niedermeier2006,DowneyF2013,CyganFKLMPPS2015}, we investigate the computational complexity of a computational problem not only with respect to the input size~$n$, but we also take into account some \emph{parameter}~$k$ that additionally restricts the input instance. The goal is then to find a parameter (or a set of parameters) that is small in real-world instances and such that the ``hardness'' can be confined to it.
The most favourable outcome is an algorithm with running time~$f(k)\cdot n^c$, where~$f$ is any computable function and~$c$ is a constant independent of the parameter. We call such algorithms \emph{fixed-parameter} algorithms, and the complexity class containing all problems admitting algorithms with this running time is called \FPT. The less favourable outcome, but still positive, is a complexity class \XP that contains all the problems for which there exists an algorithm with running time~$n^{f(k)}$, where~$f$ is, again, a computable function.
Not all parameterised problems admit \FPT or even \XP algorithms. To exclude the existence of a fixed-parameter algorithm, we can show that our problem is \Wh[1]. This can be done via the so-called \emph{parameterised reduction} from any \Wh[1] problem. If a parameterised problem is \NPh even for a constant value of a parameter~$k$, we say that the problem is \pNPh and, naturally, such problems cannot admit an \XP algorithm, unless~$\textsf{P}=\NP$.
For a more comprehensive introduction to the framework of parameterised complexity, we refer the reader to the monograph of Cygan et al.~\cite{CyganFKLMPPS2015}.

In this work, we also provide conditional lower-bounds on the running time of our algorithms. The first hypothesis we use is the well-known Exponential Time Hypothesis (ETH) of Impagliazzo and Paturi~\cite{ImpagliazzoP2001} which, roughly speaking, states that there is no algorithm solving \textsc{$3$-SAT} in time sub-exponential in the number of variables. %
We refer the interested reader to the extraordinary survey of Lokshtanov et al.~\cite{LokshtanovMS2018} for a more thorough introduction to ETH.

\subsection{Structural Parameters}
Let~$G=(V,E)$ be a graph. A set~$M\subseteq V$ is called a \emph{vertex cover} of~$G$ if~$G\setminus M$ is an edgeless graph. The \emph{vertex cover number}, denoted~$\operatorname{vc}(G)$, is the minimum size vertex cover in~$G$.

For a graph $G=(V,E)$ and $F\subseteq E$, the set $F$ is \emph{feedback-edge set} if the graph $(V,E\setminus F)$ is an acyclic graph. The \emph{feedback-edge set number} is the minimum size feedback-edge set in $G$.

A \emph{tree decomposition} of a graph~$G=(V,E)$ is a triple~$\mathcal{T}=(T,\beta,r)$, where~$T$ is a tree rooted in~$r$, and~$\beta\colon V(T)\to 2^V$ is a mapping satisfying
\begin{enumerate}
	\item~$\bigcup_{x\in V(T)} \beta(x) = V$, that is, every vertex~$v\in V$ is mapped to at least one~$x\in V(T)$,
	\item for every edge~$\{u,v\}\in E$ there exists a node~$x\in V(T)$ such that~$u,v\in\beta(x)$, and
	\item for every~$v\in V$ the nodes~$\{x\in V(T)\mid v\in\beta(x)\}$ form a connected sub-tree of~$T$.
\end{enumerate}
To distinguish between the vertices of~$G$ and the vertices of~$T$, we use the term \emph{node} when referring to the vertices of tree decomposition. Additionally, for a node~$x\in V(T)$ we refer to the set~$\beta(x)$ as a \emph{bag}.

The \emph{width} of a tree decomposition~$\mathcal{T}$ is~$\max_{x\in V(T)} |\beta(x)| - 1$ and the \emph{tree-width} of a graph~$G$, denoted~$\operatorname{tw}(G)$, is the minimum width of a tree decomposition of~$G$ over all possible tree decompositions.

\section{The Model}\label{sec:model}

In this section, we formally define the problem of our interest and provide some basic properties and observations regarding the existence of inhabitants-respecting housing. We start with the introduction of the computational problem of our interest -- the \ARH problem.

\vspace{0.15cm}
\noindent\begin{tabularx}{\linewidth}{lX}
	\toprule
	\multicolumn{2}{c}{\small\ARH} \\\midrule
	\small\emph{Input:} & \small{} A topology $G$, a set of inhabitants $\I$, a number of refugees $\R$, an inhabitants assignment~$\assgn$, and an upper-bound function $\ub$. \\
	\small\emph{Question:} & \small{}Is there a set $\pi\subseteq V_U$ of size $R$ such that $\pi$ is an inhabitants-respecting housing? \\
	\bottomrule
\end{tabularx}
\vspace{0.15cm}

We use \ARHshort to abbreviate the problem name. To illustrate the definition of the problem, in what follows, we give an example of an instance and a few possible solutions.

\begin{example}\label{ex:model}\itshape
	Let $\mathcal{I}$ be as shown in \Cref{fig:example}, that is, we have three inhabitants $h_1$, $h_2$, and $h_3$ with upper-bounds $\ub(h_1) = 1$, $\ub(h_2) = 2$, and $\ub(h_3) = 1$, and two refugees to house. The empty vertices are $x$, $y$, and $z$.
	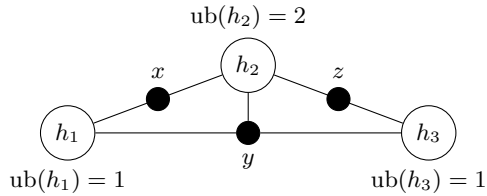
\begin{figure}[tb!]
		\centering
		\begin{tikzpicture}[scale=0.6]
			\node[draw,circle,label={270:\small$\ub(h_1) = 1$}] (h1) at (0,0) {$h_1$};
			\node[draw,circle,label={90:\small$\ub(h_2) = 2$}] (h2) at (4,1.5) {$h_2$};
			\node[draw,circle,label={270:\small$\ub(h_3) = 1$}] (h3) at (8,0) {$h_3$};
			\node[draw,circle,label={270:$y$},fill=black] (u1) at (4,0) {};
			
			\draw (h1) -- (u1) -- (h3);
			\draw (h2) -- (u1);
			\draw (h1) -- (h2) node[midway,draw,circle,fill=black,label={90:$x$}] {};
			\draw (h2) -- (h3) node[midway,draw,circle,fill=black,label={90:$z$}] {};
		\end{tikzpicture}
		\caption{An example of the \ARHshort problem. The filled vertices are empty and can be used to house refugees, while the named vertices are occupied by inhabitants. Labels next to inhabitant-occupied vertices represent upper-bounds of the corresponding inhabitant.}
		\label{fig:example}
	\end{figure}
	Observe that if we house a refugee on vertex $y$, than all inhabitants are satisfied as their upper-bound is at least~$1$. However, we cannot place the second refugee in any other empty vertex as it will exceed either $h_1$'s or $h_3$'s upper-bound. Therefore, we cannot house a refugee in $y$.
	On the other hand, if we choose as a solution the vertices $x$ and $z$, this is clearly a solution. The inhabitant $h_2$ is the only one with more than one refugee in her neighbourhood and it matches her upper-bound.
\end{example}

Observe that we can freely assume that $G$ is a bipartite graph with one part consisting of vertices occupied by inhabitants and the second part consisting of empty houses. In this setting, refugees do not have preferences at all, and inhabitants are interested only in the number of refugees in their neighbourhood. Therefore, we can remove edges between two vertices of the same type without changing the solution (cf. \cite[Prop. 3.5]{KnopS2023}).

Next, we show that, despite the simplicity of the model, the existence of an inhabitants-respecting housing is not guaranteed.

\begin{proposition}\label{lem:housing-non-existence-intolerant}
	There is an instance of the \ARHshort problem with no inhabitants-respecting refugees housing.
\end{proposition}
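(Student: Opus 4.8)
The plan is to prove this by exhibiting an explicit, very small instance and checking directly, from the definitions, that no housing of the prescribed size~$\R$ can be inhabitants-respecting. No deep argument is needed; the work is just to make sure all side-conditions of the model are met so that the witness instance is genuinely valid.

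Concretely, I would take the topology to be a star $K_{1,d}$ whose center is occupied by a single inhabitant~$h$ (so $\assgn(h)$ is the center) and whose $d$ leaves are all empty, i.e.\ $V_U$ is exactly the leaf set. The key observation is that every empty vertex lies in $N_G(\assgn(h))$, so for \emph{any} housing $\pi\subseteq V_U$ we have $N_G(\assgn(h))\cap\pi=\pi$ and hence $|N_G(\assgn(h))\cap\pi|=|\pi|$. Since the problem asks for a housing of size exactly~$\R$, any candidate solution forces $\R$ refugees into $h$'s neighbourhood. It therefore suffices to pick $\ub(h)<\R$ while keeping $\ub(h)\in[\deg(\assgn(h))]_0$. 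The cleanest choice is $d=1$ (a single edge), $\ub(h)=0$, and $\R=1$: then the size requirement $|V|\ge|\I|+\R$ reads $2\ge 2$ and is satisfied, the unique size-$1$ housing is the single leaf, and it violates $\ub(h)=0$. I would then just verify the remaining formal conditions ($\assgn$ injective, $\ub$ in range, enough empty vertices) and conclude.

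I expect essentially no obstacle here; the only thing to be careful about is respecting all the bookkeeping of the model (that $\ub$ is in the allowed range, that $|V|\ge|\I|+\R$, and that the housing must have size exactly~$\R$). If a ``less degenerate'' witness is preferred—one that does not rely on an upper-bound of~$0$ and thus shows that the failure is really about matching the prescribed number~$\R$ rather than forbidding all refugees—I would instead use $K_{1,3}$ with $\ub(h)=1$ and $\R=2$: every size-$2$ housing places both refugees on leaves adjacent to~$h$ and so exceeds $\ub(h)=1$, even though a size-$1$ housing would be inhabitants-respecting. Either instance proves the claim.
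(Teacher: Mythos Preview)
Your proposal is correct and matches the paper's own proof almost exactly: the paper also uses a star with the single inhabitant at the center, $\ub(h)=0$, and $\R\geq 1$, observing that every empty vertex lies in $N_G(\assgn(h))$. Your additional care in verifying the side-conditions and your alternative $K_{1,3}$ witness with $\ub(h)=1$ are nice touches but not needed for the statement.
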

\begin{proof}
	Let the topology~$G$ be a star with~$n$ leaves and~$h$ be the single inhabitant assigned to the centre of~$G$. Furthermore, let $\ub(h) = 0$ and $\R \geq 1$. We can house our refugees only on the leaves of~$G$, but all are neighbours to the vertex $\assgn(h)$. Therefore, there is no inhabitants-respecting housing.
\end{proof}

One can argue that in \Cref{lem:housing-non-existence-intolerant}, we use inhabitant~$h$ with significantly degenerated upper-bound that does not allow any refugees in the neighbourhood of~$h$. 
We show in the following auxiliary lemma that such inhabitants can be safely removed while preserving the (non-)existence of an inhabitants-respecting housing.

\begin{definition}
	An inhabitant~$h\in\I$ is called \emph{intolerant} if~$\ub(h) = 0$.
\end{definition}

\begin{lemma}
	\label{lem:remove-intolerant}
	Let~$\mathcal{I} = (G,\I,\R,\assgn,\ub)$ be an instance of the \ARHshort problem,~$h\in\I$ be an intolerant inhabitant, and~$F_j = \{\assgn(h)\}\cup U_h$.~$\mathcal{I}$ admits an inhabitants-respecting housing~$\pi$ if and only if the instance~$\mathcal{I}' = (G\setminus F_j,\I\setminus\{h\},R,\assgn,\ub)$ admits an inhabitants-respecting housing.
\end{lemma}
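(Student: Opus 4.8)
The plan is to prove the equivalence directly, by arguing that one and the same vertex set serves as an inhabitants-respecting housing in $\mathcal{I}$ if and only if it does so in $\mathcal{I}'$ (the number $\R$ of refugees is unchanged, so ``admits an inhabitants-respecting housing'' means the same size-$\R$ requirement on both sides). The one structural fact I would use throughout concerns intolerant inhabitants: since $\ub(h)=0$, every inhabitants-respecting housing $\pi$ of $\mathcal{I}$ satisfies $N_G(\assgn(h))\cap\pi=\emptyset$; as $\pi\subseteq V_U$ this yields $\pi\cap U_h=\emptyset$, where $U_h=N_G(\assgn(h))\cap V_U$, and together with $\assgn(h)\in V_\I$ (so $\assgn(h)\notin\pi$) we get $\pi\cap F_j=\emptyset$. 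A short preliminary computation also records that in $\mathcal{I}'$ the set of empty vertices is exactly $V_U\setminus U_h$ and the inhabitants assignment is $\assgn$ restricted to $\I\setminus\{h\}$; hence a housing of $\mathcal{I}'$ is precisely an $\R$-element subset of $V_U\setminus U_h$ respecting, inside $G\setminus F_j$, the upper-bounds of the inhabitants in $\I\setminus\{h\}$.

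For the ``$\Rightarrow$'' direction, take an inhabitants-respecting housing $\pi$ of $\mathcal{I}$. By the observation above $\pi\subseteq V_U\setminus U_h$ and $|\pi|=\R$, so $\pi$ is a valid housing in $\mathcal{I}'$. For every $i\in\I\setminus\{h\}$ we have $\assgn(i)\notin F_j$, hence $N_{G\setminus F_j}(\assgn(i))=N_G(\assgn(i))\setminus F_j$, and since $\pi\cap F_j=\emptyset$ this gives $N_{G\setminus F_j}(\assgn(i))\cap\pi=N_G(\assgn(i))\cap\pi$, whose cardinality is at most $\ub(i)$. Thus $\pi$ is inhabitants-respecting in $\mathcal{I}'$.

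For the ``$\Leftarrow$'' direction, take an inhabitants-respecting housing $\pi'$ of $\mathcal{I}'$. Then $\pi'\subseteq V_U\setminus U_h\subseteq V_U$ and $|\pi'|=\R$, so $\pi'$ is a valid housing in $\mathcal{I}$. The inhabitant $h$ is satisfied because $N_G(\assgn(h))\cap\pi'=\bigl(N_G(\assgn(h))\cap V_U\bigr)\cap\pi'=U_h\cap\pi'=\emptyset$ and $\ub(h)=0$; every other $i\in\I\setminus\{h\}$ is satisfied by the same identity $N_G(\assgn(i))\cap\pi'=N_{G\setminus F_j}(\assgn(i))\cap\pi'$ as above, which again uses $\pi'\cap F_j=\emptyset$ (valid since $\pi'\subseteq V_U\setminus U_h$ and $\assgn(h)\in V_\I$). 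Hence $\pi'$ is inhabitants-respecting in $\mathcal{I}$.

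I do not expect a genuine obstacle; the argument is essentially careful bookkeeping. The two points needing a little care are (i) correctly identifying which vertices become empty versus occupied in $\mathcal{I}'$ after deleting $F_j$, and (ii) the degenerate case where $G\setminus F_j$ violates the topology size requirement $|V|\geq|\I\setminus\{h\}|+\R$ — but in that case $|V_U\setminus U_h|<\R$, so $\mathcal{I}$ admits no housing disjoint from $U_h$ and is therefore a \NoI as well, and the equivalence still holds.
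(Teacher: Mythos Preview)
Your proposal is correct and follows essentially the same approach as the paper: both directions use the same housing $\pi$ (resp.\ $\pi'$) in the other instance, relying on the fact that an intolerant inhabitant forces $\pi\cap U_h=\emptyset$. Your write-up is simply more detailed than the paper's brief version, in particular the explicit neighbourhood identity $N_{G\setminus F_j}(\assgn(i))\cap\pi=N_G(\assgn(i))\cap\pi$ and the handling of the degenerate size condition, which the paper leaves implicit.
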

\begin{proof}
	Assume that $\mathcal{I}$ is a \YesI and $\pi$ is an inhabitants-respecting housing. Since~$h$ is intolerant, we have $\pi\cap U_h = \emptyset$. Therefore, $\pi$ is also a solution for $\mathcal{I}'$, as we only removed the houses in $F_h$ and the inhabitant~$h$.
	
	In the opposite direction, let $\mathcal{I}'$ be a \YesI and $\pi'$ be an inhabitants-respecting housing. As all refugees are housed on the vertices in $V(G)\setminus F_h$, we indeed have that $\pi'$ is a solution for $\mathcal{I}$.
\end{proof}

For the remainder of the paper, we assume that the input instance is reduced with respect to \Cref{lem:remove-intolerant}, that is, we assume only instances without intolerant inhabitants. Next, we show that the presence of intolerant inhabitants is not the only obstacle to guarantee the existence of inhabitants-respecting refugee housing.

\begin{proposition}
	\label{lem:housing-non-existence}
	There is an instance of the \ARHshort problem with no inhabitants-respecting refugees housing even if the instance does not contain intolerant agents.
\end{proposition}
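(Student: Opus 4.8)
The plan is to exhibit a small explicit instance on which every placement of $\R$ refugees violates some inhabitant's upper-bound, while ensuring no inhabitant has $\ub = 0$. The obstacle in \Cref{lem:housing-non-existence-intolerant} was a single intolerant inhabitant whose whole neighbourhood consisted of empty vertices; here I must instead engineer a \emph{combinatorial} conflict: enough empty vertices that some refugee is forced into the neighbourhood of an already-saturated inhabitant, no matter how the refugees are distributed.

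First I would take the topology to be a small bipartite-like gadget (recall we may assume $G$ bipartite between $V_\I$ and $V_U$). A clean choice: let $V_U = \{x,y,z\}$ be three empty vertices, and introduce a handful of inhabitants each with $\ub(h) = 1$ so that no inhabitant is intolerant. Arrange the edges so that the three empty vertices pairwise ``conflict'' — i.e.\ for every pair of empty vertices there is an inhabitant adjacent to exactly that pair (a triangle of constraints realised via three inhabitants $h_{xy}, h_{yz}, h_{xz}$ with $\ub = 1$ each, where $h_{xy}\sim x,y$, etc.). Then set $\R = 2$. Any housing of size $2$ picks some pair of empty vertices, and the inhabitant associated with that pair sees $2 > 1$ refugees, a contradiction. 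One should also double-check that $\R = 2$ does not trivialise things via some empty vertex outside all conflicts — so the construction must use \emph{only} the vertices $x,y,z$ as empty, which forces $|V| = |V_\I| + \R = 3 + 3$, easily arranged.

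The key steps, in order, are: (i) describe the gadget precisely — vertex set, the inhabitants assignment $\assgn$, the edge set, the values $\ub(h_{xy}) = \ub(h_{yz}) = \ub(h_{xz}) = 1$, and $\R = 2$; (ii) observe the instance is reduced, i.e.\ contains no intolerant inhabitant since every upper-bound equals $1$; (iii) argue that any candidate housing $\pi \subseteq V_U$ with $|\pi| = 2$ equals one of $\{x,y\}, \{y,z\}, \{x,z\}$, and that in each case the corresponding inhabitant $h$ has $|N_G(\assgn(h)) \cap \pi| = 2 > 1 = \ub(h)$, so $\pi$ is not inhabitants-respecting; (iv) conclude no inhabitants-respecting housing exists. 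Optionally I would remark that the same idea scales and that the instance can be made connected by adding further empty vertices that only attach to tolerant inhabitants, if connectivity is desired.

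The main obstacle I expect is purely one of \emph{bookkeeping}: making sure the constructed $G$ is a legitimate topology ($|V| \geq |\I| + \R$), that removing the ``same-type'' edges (as allowed by the reduction after \Cref{ex:model}) does not destroy the conflict structure, and that no unintended empty vertex or slack edge lets a size-$2$ housing slip through. None of this is deep; the argument is a short pigeonhole once the gadget is fixed, and I would keep the write-up to a few lines plus possibly a small \texttt{tikzpicture} mirroring \Cref{fig:example}.
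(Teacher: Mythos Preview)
Your proposal is correct: the six-cycle gadget with three inhabitants $h_{xy},h_{yz},h_{xz}$ (each with $\ub=1$), three empty vertices $x,y,z$, and $\R=2$ indeed has no inhabitants-respecting housing, and no inhabitant is intolerant. The bookkeeping you flag is fine; in particular $|V|=6\geq|\I|+\R=5$, and the graph is already bipartite between $V_\I$ and $V_U$.

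The paper takes a slightly different and somewhat simpler route. Instead of engineering \emph{pairwise} conflicts among three empty vertices, it uses a complete bipartite graph $K_{2,\ell}$ with two inhabitants on one side (both with $\ub=1$) and $\ell\geq 2$ empty vertices on the other, together with any $\R\in[2,\ell]$. Because both inhabitants are adjacent to \emph{every} empty vertex, any housing of size at least~$2$ immediately violates both upper-bounds; no case analysis over pairs is needed. This construction also scales for free: one can increase the number of inhabitants, their common upper-bound, and~$\R$ simultaneously, which the paper remarks on right after the proof. Your gadget, by contrast, is a fixed $6$-vertex instance; it makes the pigeonhole explicit but does not generalise as cleanly without adding more ``pair'' inhabitants. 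Either argument suffices for the proposition as stated.
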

\begin{proof}
	Let~$\ell\in\N$ be a natural number such that~$\ell \geq 2$. Consider an instance where the topology is complete bipartite graph~$G=(A\cup B,E)$, where~$A=\{a_1,a_2\}$,~$B=\{b_1,\ldots,b_\ell\}$, and~$E = \{\{a_i,b_j\}\mid i\in[2]\land j \in [\ell]\}$. There are two inhabitants~$h_1,h_2$ such that~$\ub(h_1) = \ub(h_2) = 1$ and for every~$i\in\{1,2\}$ we have~$\assgn(h_i) = a_i$, that is, the inhabitants occupy one side of the topology. Finally, suppose that~$2 \leq |\R| \leq \ell$. It is easy to see that we can house at most one refugee in the common neighbourhood of inhabitants~$h_1$ and~$h_2$. However,~${N_G(\assgn(h_1)) = N_G(\assgn(h_2)) = B = V_U}$, that is, all empty vertices are in their common neighbourhood and we have at least two refugees to house. Therefore, such an instance does not admit any inhabitants-respecting refugee housing.
\end{proof}

Naturally, the topology used in the proof of \Cref{lem:housing-non-existence} can be extended in multiple ways. In particular, we can arbitrarily increase the number of inhabitants and even their upper-bound, as long as the number of refugees and the size of part~$B$ is greater than~$\max_{h\in\I} \ub(h)$.

In the proof of \Cref{lem:housing-non-existence}, we use a construction where all inhabitants have the same upper-bound and the number of refugees is only greater by one compared to this upper-bound. In the following proposition, we introduce the only inhabitants-respecting housing existence guarantee we can give based on the upper-bound function and the number of refugees.

\begin{proposition}
	Let~$\mathcal{I} = (G,\I,\R,\assgn,\ub)$ be an instance of the\ARH problem and let ${\min_{h\in\I} \ub(h) \geq \R}$. Than~$\mathcal{I}$ is a \YesI.
\end{proposition}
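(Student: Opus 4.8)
The plan is to show that if every inhabitant's upper-bound is at least~$\R$, then \emph{any} housing of~$\R$ refugees is automatically inhabitants-respecting, so it suffices to observe that such a housing exists at all. First I would recall that a housing is a set~$\pi\subseteq V_U$ of empty vertices, and that by the definition of the topology we require~$|V| \geq |\I| + \R$, hence~$|V_U| = |V| - |\I| \geq \R$. Therefore we may pick an arbitrary subset~$\pi\subseteq V_U$ with~$|\pi| = \R$; this is a valid candidate solution.

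Next I would verify that this~$\pi$ is inhabitants-respecting. For any inhabitant~$h\in\I$, the refugees in~$h$'s neighbourhood form the set~$N_G(\assgn(h))\cap\pi$, whose size is at most~$|\pi| = \R$. By the hypothesis~$\min_{h\in\I}\ub(h)\geq\R$, we have~$\ub(h)\geq\R\geq |N_G(\assgn(h))\cap\pi|$, so the condition~$|N_G(\assgn(h))\cap\pi|\leq\ub(h)$ holds for every~$h\in\I$. Hence~$\pi$ is an inhabitants-respecting housing of the required size~$\R$, and~$\mathcal{I}$ is a \YesI.

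This argument is essentially immediate; there is no real obstacle. The only point worth a moment's care is making sure that an empty-vertex set of size exactly~$\R$ is guaranteed to exist, which follows directly from the cardinality constraint~$|V|\geq|\I|+\R$ built into the definition of a topology (so~$|V_U|\geq\R$). One could optionally remark that this bound is tight in the sense of \Cref{lem:housing-non-existence}: once the number of refugees exceeds the minimum upper-bound even by one, existence can already fail.
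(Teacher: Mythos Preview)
Your proof is correct and follows essentially the same approach as the paper: pick any $\R$-sized subset of empty vertices and observe that no inhabitant's upper-bound can be exceeded since $|N_G(\assgn(h))\cap\pi|\leq|\pi|=\R\leq\ub(h)$. You are actually slightly more careful than the paper, which omits the verification that $|V_U|\geq\R$ (and even writes $\pi\subseteq V$ rather than $\pi\subseteq V_U$).
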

\begin{proof}
	Let the housing $\pi$ be an arbitrary $\R$-sized subset of~$V$. Since $\min_{h\in\I} \ub(h)$ is greater than or equal to $\R$, $\pi$ is necessarily inhabitants-respecting.
\end{proof}

As the final result of this section, we give a simple observation that bounds the upper-bounds in terms of the number of refugees. In fact, we show that all inhabitants with upper-bound greater than or equal to the number of refugees to house can be safely removed.

\begin{lemma}
	Let~$\mathcal{I} = (G,\I,\R,\assgn,\ub)$ be an instance of the \ARHshort problem and $h\in\I$ be an inhabitant with~$\ub(h) \geq \R$.~$\mathcal{I}$ admits an inhabitants-respecting housing~$\pi$ if and only if the instance~$\mathcal{I}' = (G\setminus \{\assgn(h)\},\I\setminus\{h\},R,\assgn,\ub)$ admits an inhabitants-respecting housing.
\end{lemma}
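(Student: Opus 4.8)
The plan is to observe that an inhabitant $h$ with $\ub(h)\geq\R$ never imposes a binding constraint: any housing contains exactly $\R$ refugees, so at most $\R$ of them can lie in $N_G(\assgn(h))$, and by assumption $\R\leq\ub(h)$. Hence $h$ is respected by \emph{every} housing, and deleting $h$ together with the (occupied, hence never used) vertex $\assgn(h)$ should not affect feasibility. The argument then splits into the two implications.

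For the ``only if'' direction I would take an inhabitants-respecting housing $\pi$ of $\mathcal{I}$. Since $\assgn(h)\in V_\I$ is occupied, $\assgn(h)\notin\pi$, so $\pi$ is still an $\R$-sized subset of the empty vertices of $\mathcal{I}'$ -- removing an occupied vertex leaves the set $V_U$ untouched. As $G\setminus\{\assgn(h)\}$ is a subgraph of $G$, every surviving inhabitant $i$ satisfies $N_{G\setminus\{\assgn(h)\}}(\assgn(i))\subseteq N_G(\assgn(i))$, so the number of refugees in its neighbourhood can only drop and the bound $\ub(i)$ remains met; thus $\pi$ witnesses that $\mathcal{I}'$ is a \YesI.

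For the ``if'' direction I would start from an inhabitants-respecting housing $\pi'$ of $\mathcal{I}'$ and show the same set works for $\mathcal{I}$. The empty vertices of the two instances coincide, so $\pi'$ is an admissible $\R$-housing in $G$; for every inhabitant $i\neq h$ we have $N_G(\assgn(i))\cap\pi'=N_{G\setminus\{\assgn(h)\}}(\assgn(i))\cap\pi'$ because $\assgn(h)\notin\pi'$, so all of those upper-bounds stay satisfied; and for $h$ itself the opening observation gives $|N_G(\assgn(h))\cap\pi'|\leq\R\leq\ub(h)$, closing the case.

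I do not expect a genuine obstacle. The only point needing a word of care is that $\mathcal{I}'$ is a syntactically legal instance, i.e.\ that $\ub(i)\leq\deg_{G\setminus\{\assgn(h)\}}(\assgn(i))$ still holds for the remaining inhabitants (and that $|V(G)|-1\geq|\I|-1+\R$); both are immediate once we invoke the bipartite normal form from the discussion after \Cref{ex:model}, under which no inhabitant is adjacent to $\assgn(h)$ and hence no inhabitant's degree changes.
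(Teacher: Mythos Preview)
Your proposal is correct and follows essentially the same approach as the paper's own proof: both directions use the same observation that an inhabitant with $\ub(h)\geq\R$ is satisfied by any housing of size~$\R$, and that removing an occupied vertex neither disturbs the set of empty vertices nor the constraints of the remaining inhabitants. Your write-up is simply more detailed (in particular you explicitly address the well-formedness of~$\mathcal{I}'$), but the underlying argument is identical.
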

\begin{proof}
	Let $\pi$ be an inhabitants-respecting housing in $\mathcal{I}$. Clearly, $\pi$ is also inhabitants-respecting in $\mathcal{I}'$ since the only change was the removal of $h$ and all remaining inhabitants were satisfied with $\pi$.
	In the opposite direction, let $\pi'$ be an inhabitants-respecting housing in $\mathcal{I}'$. If we return $h$ to $G'$, then we cannot exceed $h$'s upper-bound as $\ub(h) \geq \R = |\pi'|$.
\end{proof}

\section{Finding Inhabitant-Respecting Housings}

In the previous section, we showed that the existence of an inhabitants-respecting housing for a given instance of the \ARH problem is not guaranteed in general. In this section, we propose a number of algorithms that, given an instance, decide whether an inhabitants-respecting refugee housing exists and, if so, then return such housing~$\plc$. We support these algorithmic upper-bounds with respective hardness and running-time lower-bounds.

As our first result, we show that the \ARH problem is \NPc even if the topology is of bounded degree (and therefore also with bounded upper-bounds).

\begin{theorem}
	\label{thm:NPh}
	The \ARHshort problem is \NPc even if the topology is a planar graph of maximum degree~$3$.
\end{theorem}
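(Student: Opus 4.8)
The plan is to reduce from a suitable \NPc restricted variant of \textsc{SAT} or \textsc{Independent Set} on planar bounded-degree graphs, and to perform the reduction carefully so that the resulting topology remains planar with maximum degree~$3$. A natural choice is \textsc{Planar 3-SAT} (or better, a planar, bounded-occurrence variant such as \textsc{Planar $(3,3)$-SAT} where every variable occurs at most three times), because its incidence graph is already planar and of bounded degree, which makes it much easier to keep planarity and degree under control in the construction. I would first recall the \NPc of the base problem and fix the convention that each variable occurs a bounded number of times.

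The construction itself: for each variable~$x$ I would build a small gadget whose only inhabitants-respecting housings correspond to the two truth assignments of~$x$; a cycle-like gadget where inhabitants with upper-bound~$1$ force an ``alternating'' pattern of chosen empty vertices is the standard trick, and here it fits perfectly since the upper-bound function is exactly what lets us encode ``pick at most one of these two''. For each clause I would build a clause gadget connected (via empty vertices shared with, or paths leading to, the literal gadgets) to the three literals it contains, with an inhabitant whose upper-bound is~$2$ (so that the clause is satisfied iff at least one of its three literal-edges is ``activated'' in the housing, i.e.\ at most two are simultaneously forced off). I would then set~$\R$ to be exactly the number of empty vertices that a consistent, satisfying selection must use, so that a housing of the right size exists iff the formula is satisfiable. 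Throughout, every gadget vertex would be given degree at most~$3$, and crossings would be avoided by routing connecting paths along the planar embedding of the incidence graph, subdividing edges with extra empty/occupied vertices as needed (subdivision preserves planarity and can only lower the maximum degree of the newly introduced vertices).

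The two directions of correctness are then routine once the gadgets are specified: from a satisfying assignment one reads off which empty vertices to fill in each variable and clause gadget and checks the size is~$\R$ and every inhabitant's upper-bound is met; conversely, from an inhabitants-respecting housing of size~$\R$ one argues that the variable gadgets are forced into one of the two ``clean'' states (a counting argument, since~$\R$ is chosen tightly), this defines an assignment, and the clause-gadget inhabitants having upper-bound~$2$ together with tightness of~$\R$ forces each clause to be satisfied. Membership in \NP is immediate, since an inhabitants-respecting housing is a polynomial-size certificate verifiable in polynomial time.

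The main obstacle I anticipate is not the logical correctness but the simultaneous enforcement of \emph{planarity}, \emph{maximum degree~$3$}, and the tight choice of~$\R$: making the variable gadget both force exactly two states and have all vertices of degree $\le 3$ typically requires a cycle of ``selector'' vertices of even length with inhabitants of upper-bound~$1$ interleaved, and one must check that the parity works out and that the interface vertices by which the gadget attaches to clause gadgets do not blow up the degree. Getting the bookkeeping of~$\R$ exactly right — so that no ``wasteful'' housing of the correct size can cheat the intended correspondence — is the delicate part, and I would handle it by making every gadget contribute a fixed, forced number of housed vertices regardless of the encoded choice, so that~$\R$ decomposes cleanly as a sum of per-gadget contributions plus a small satisfiability-dependent term that is~$0$ exactly when all clauses are satisfied.
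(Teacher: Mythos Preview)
Your proposal is a plan rather than a proof: you sketch a \textsc{Planar (3,3)-SAT} reduction with variable and clause gadgets, but you explicitly leave the gadget specifications, the degree/planarity verification, and the tight counting of~$\R$ as unresolved obstacles. In particular, the clause gadget idea (``inhabitant with upper-bound~$2$ so that at most two literals are off'') needs to be made precise, and since upper-bounds only enforce \emph{at most} constraints, the entire argument that a housing of size exactly~$\R$ forces the intended gadget states rests on a global counting argument you have not carried out. So as written this is not yet a proof, though the outline is plausible.

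The paper takes a dramatically simpler route that sidesteps all of these difficulties: it reduces from \textsc{Independent Set} on planar graphs of maximum degree~$3$. One simply subdivides every edge of the input graph~$G$, places an inhabitant with upper-bound~$1$ on each subdivision vertex, leaves the original vertices empty, and sets~$\R=k$. An inhabitants-respecting housing of size~$k$ is then \emph{exactly} an independent set of size~$k$ in~$G$, since an inhabitant on a subdivided edge~$\{u,v\}$ sees both refugees precisely when~$u$ and~$v$ are both chosen. Planarity is preserved by subdivision, the subdivision vertices have degree~$2$, and the original vertices keep their degree~$\leq 3$, so there is nothing to check. No gadgets, no tight counting, no parity issues.

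The comparison: your SAT-based route would, if completed, presumably work, but it requires building and verifying several gadgets and a delicate global counting argument; the paper's approach needs none of that because \textsc{Independent Set} is already ``housing-shaped'' once you observe that an inhabitant with~$\ub=1$ on a degree-$2$ vertex encodes the constraint ``not both neighbours''. The lesson is that the right source problem can make the reduction nearly trivial.
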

\begin{proof}
	We give a reduction from the \textsc{Independent Set} problem which is known to be \NPh even on planar graphs of maximum degree three~\cite{GareyJ1977}. Recall that instance $\mathcal{I}$ of the \textsc{Independent Set} problem consists of a graph $G$ and an integer $k\in\N$, and the goal is to decide whether there exists a set $S\subseteq V(G)$ of size at least $k$ such that $G[S]$ is an edgeless graph.
	
	Given $\mathcal{I}=(G,k)$, we construct an equivalent instance $\mathcal{J}=(G',\I,\R,\assgn,\ub)$ of the \ARHshort problem as follows. Fix an ordering $(e_1,\ldots,e_m)$ of the edges of $G$. Now, we derive the graph $G'$ from $G$ by subdividing every edge $e_i$ with an intermediate vertex $w_i$. Every vertex $w_i$, $i\in[m]$, is occupied by an inhabitant~$h_i$ with $\ub(h_i) = 1$. These inhabitants ensure that the set of vertices occupied by refugees forms an independent set in the original graph. Observe that $V_U = V(G)$ and the subdivision of edges do not increase the maximum degree of the graph. Hence,~$G'$ is clearly of maximum degree three. Additionally, it is well-known that a graph is planar if and only if its subdivision is planar. Therefore, $G'$ is a planar graph. To complete the construction, we set $R=k$.
	
	For the correctness, let $\mathcal{I}=(G,k)$ be a \YesI and $S\subseteq V(G)$ be an independent set of size at least $k$. We fix an arbitrary ordering of $S$ and let $v_1,\ldots,v_k$ be the first $k$ vertices of this ordering. We set~${\pi=\{v_i\mid i\in[k]\}}$. Suppose that $\pi$ is not inhabitants-respecting. Then, there exists an inhabitant~$h_j$, ${j\in[m]}$, such that $|N_G(\assgn(h_j)) \cap \pi| = 2$. Let $u,w \in N_G(\assgn(h_j))\cap \pi$ be two distinct vertices in the neighbourhood of~$h_j$ with refugees housed on them. Then, by construction, $\{u,v\}\in E(G)$, which contradicts $S$ being an independent set.
	
	In the opposite direction, let the instance $\mathcal{J}$ be a \YesI and $\pi$ be an inhabitants-respecting housing. We set $S = \pi$ and, for the sake of contradiction, assume that $S$ is not an independent set in $G$. Then there exists a pair of distinct vertices $u,v\in S$ such that $\{u,v\}\in E(G)$. Consequently, $u$ and $v$ are also members of $\pi$ and since in $G'$ the edge $\{u,v\}$ is subdivided and there is a vertex~$w$ occupied by an inhabitant $h$ with $\ub(h) = 1$, it follows that $\pi$ is not inhabitants-respecting due to $h$. This is a contradiction, and $S$ is clearly an independent set in $G$ of size $k$.
	
	Finally, it is easy to see that the reduction can be done in polynomial time. Additionally, given a set $\pi$, it is easy to verify in polynomial time whether $\pi$ is inhabitants-respecting by enumerating all inhabitants and comparing their upper-bounds with the number of refugees assigned to their neighbourhood. Hence, the problem is clearly in \NP, finishing the proof.
\end{proof}

In contrast to \Cref{thm:NPh}, Knop and Schierreich~\cite{KnopS2023} showed that if the topology is of maximum degree ($\Delta$) two, then the problem can be solved in cubic time even if the inhabitants approve arbitrary intervals. It directly follows that the polynomial-time solvability holds even in our special case with upper-bounds; however, as we will show later, there exists a much simpler and faster algorithm for our setting.

We begin our algorithmic journey with a polynomial-time algorithm that decides the existence of an inhabitants-respecting housing when the topology is a forest.

\begin{theorem}\label{thm:forest:poly}
	Every instance of \ARH where the topology is a forest can be solved in $\Oh{n\cdot(\Delta+\Delta\log \Delta)}$ time, where $\Delta = \max_{v\in V(G)}\deg(v)$.
\end{theorem}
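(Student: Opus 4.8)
The plan is to root each tree of the forest arbitrarily and process vertices in a bottom-up (post-order) fashion, performing a dynamic-programming / greedy computation over each tree and then combining across components. After reducing with respect to \Cref{lem:remove-intolerant} and assuming the instance is bipartite between $V_\I$ and $V_U$ (as noted after \Cref{ex:model}), the core observation is that in a tree each empty vertex $v\in V_U$ has a parent and possibly several children, and the only inhabitants constraining whether we may house a refugee on $v$ are its tree-neighbours among $V_\I$. Since housing a refugee on an empty vertex $v$ ``costs'' one unit toward each adjacent inhabitant's upper-bound, and these inhabitants are shared only with $v$'s siblings (those empty vertices with the same parent inhabitant) and with $v$'s grandchildren-side via its children inhabitants, the interaction is local: the decision at an inhabitant $h$ couples only the empty vertices in $N_G(\assgn(h))$.

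Concretely, first I would compute, for each inhabitant $h$, the quantity $\ub(h)$ and treat it as a capacity on the set $N_G(\assgn(h))\subseteq V_U$. The housing we seek is then exactly a subset $\pi\subseteq V_U$ of size $\R$ respecting all these capacities. I process the forest bottom-up; at each empty vertex $v$ I maintain whether it is ``available'' (its descendant-side inhabitant — the inhabitant that is $v$'s child in the rooted tree, if any — still has remaining capacity), and I greedily decide inclusion. The key structural fact I would prove is an exchange argument: among all inhabitants-respecting housings of maximum size, there is one that, for each inhabitant $h$ processed deepest-first, fills $h$'s capacity using the empty vertices in $N_G(\assgn(h))$ that are ``lowest'' in the tree (closest to the leaves), because an empty vertex deeper in the tree is constrained by fewer not-yet-processed inhabitants than a shallower one. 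This lets a single bottom-up sweep compute the maximum number of refugees that can be housed; we answer \Yes iff this maximum is at least $\R$ (and then trim the witness to size exactly $\R$).

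For the running time: the sweep visits each vertex once; at an inhabitant node $h$ we must select, among its adjacent empty vertices that are still available, the $\ub(h)$ of them that should be kept — but by the exchange argument it suffices to keep any $\ub(h)$ of the currently-available ones, so we just need to count and mark. To realise the ``lowest first'' priority cleanly one can, at each inhabitant, sort its available empty neighbours (by depth, or simply process them in the order children are enumerated), costing $\Oh{\Delta\log\Delta}$ per inhabitant and $\Oh{\Delta}$ for the marking/counting, giving $\Oh{n\cdot(\Delta+\Delta\log\Delta)}$ overall; the $\Delta$ term absorbs the cost of iterating over each vertex's neighbourhood and the bookkeeping of remaining capacities and availability flags. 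Finally, because the components of a forest are independent, the per-component maxima simply add, and any shortfall or surplus relative to $\R$ is handled by the trivial fact that an inhabitants-respecting housing of size $s$ yields one of every size $s'\le s$.

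The main obstacle I anticipate is making the greedy exchange argument fully rigorous: one must argue that committing to ``keep the $\ub(h)$ lowest available empty neighbours of $h$'' is never worse than any other choice, taking into account that an empty vertex $v$ kept for inhabitant $h$ may also sit in $N_G(\assgn(h'))$ for the (unique, on a tree) inhabitant $h'$ that is $v$'s parent higher up — so a kept vertex can consume capacity of a not-yet-processed inhabitant, and one must show the ``lowest first'' rule minimises such collateral consumption. I expect this to go through by a standard ``uncrossing'' swap between two optimal solutions, but it is the step that needs care; everything else is routine bookkeeping.
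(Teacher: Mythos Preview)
Your greedy approach is correct and takes a genuinely different route from the paper's. The paper runs a two-state leaf-to-root dynamic programme: for each vertex $v$ it stores the maximum number of refugees housable in the subtree at $v$ under each of the two possible boundary conditions, and at an inhabitant vertex it builds a max-heap over the children's gains (the difference between the two stored values) and extracts the top $\ub$ of them---this heap is where the $\Delta\log\Delta$ in the stated bound comes from. Your method instead commits greedily, at each inhabitant, to any $\ub(h)$ ``available'' empty children. The two coincide because one can show inductively that an empty child $w$ is available in your sense exactly when its DP-gain equals~$1$ (and gains never exceed~$1$), so all available children are interchangeable and the paper's heap is in effect merely counting them. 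That equivalence is the clean way to discharge the exchange step you flag as delicate; by contrast, your ``lowest by depth'' heuristic is vacuous among the children of a single inhabitant (they all sit at the same depth), and the $\Delta\log\Delta$ term in your running-time analysis is therefore artificial for the greedy. Two small oversights to fix: an empty vertex can have several inhabitant children, not one, so ``available'' should require \emph{all} of them to have spare capacity; and an empty root is never touched by your sweep, so it needs a separate check (or simply root each component at an inhabitant whenever one exists).
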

\begin{proof}
	\newcommand{\DP}{\ensuremath{\operatorname{DP}}}
	Recall that due to \Cref{lem:remove-intolerant} we may assume that the instance does not contain intolerant inhabitants. Our algorithm is based on the leaf-to-root dynamic programming approach, and we run the algorithm for every connected component of the topology separately.
	
	Let $T$ be a tree rooted in a vertex $\operatorname{r}(T)$. In the dynamic programming table~$\DP$, we store for every vertex $v\in V(G)$ two values $\DP_v[ 0 ]$ and $\DP_v[ 1 ]$. The semantics of the keys differs based on the type of vertex: \begin{itemize}
		\item If $v\in V_U$, then in $\DP_v[ 0 ]$ we store the maximum number of refugees we can house in the sub-tree rooted in $v$ assuming that $v$ remains unoccupied in the solution, and in $\DP_v[ 1 ]$ we store the maximum number of refugees we can house in the sub-tree rooted in $v$ assuming that there is a refugee housed in $v$.
		\item If $v$ is occupied by an inhabitant $h$, then $\DP_v[ 0 ]$ stores the maximum number of refugees we can house in the sub-tree rooted in $v$ assuming that the parent of $v$ is empty, and $\DP_v[ 1 ]$ stores the maximum number of refugees we can house in the sub-tree rooted in~$v$ assuming that a refugee is housed in the parent of $v$, respectively.
	\end{itemize}
	
	The computation runs from leaves to root as follows. We describe the procedure separately for leaves and internal vertices and prove the correctness of every step by induction.
	
	First, let $v$ be a leaf, and $v\in V_U$. Then, we set ${\DP_v[ 0 ] = 0}$ and $\DP_v[ 1 ] = 1$. Observe that this is clearly correct, as the sub-tree rooted in $v$ can house at most $1$ refugee and the value is $1$ if and only if the vertex $v$ is assumed to be occupied in the solution.
	
	If $v \notin V_U$ and $v$ is a leaf, we set $\DP_v[ 0 ] = \DP_v[ 1 ] = 0$. Again, it is easy to see that this computation is correct as the sub-tree rooted in $v$ contains no empty house and therefore we cannot house any refugee in this sub-tree regardless of occupancy of the parent.
	
	Next, let $v\in V_U$ be an internal vertex. For $i\in\{0,1\}$ we set $\DP_v[ i ] = \sum_{w\in\operatorname{children}(v)} \DP_w[ i ] + i$. The occupancy of $v$ is prescribed by the value of $i$ and the maximum number of refugees that we can house in the sub-tree rooted in $v$ is clearly the sum of refugees we can house in the sub-trees rooted in children of $v$ plus $1$ if $v$ is also used for housing. By the induction hypothesis, these values are computed correctly and the correctness for this type of vertices follows. 
	
	Finally, assume that $v$ is an internal vertex occupied by an inhabitant. We start the computation by creating a max-heap $H$ and inserting the value $g_w = \DP_w[ 1 ] - \DP_w[ 0 ]$ for every $w\in\operatorname{children}(v)$ such that the value $g_w$ is strictly positive. The idea is to have information about the additional value we can gain from every sub-tree rooted in a child $w$ if we allow $w$ to be also occupied. Next, we set for every $i\in\{0,1\}$ the value of $\DP_v[ i ] = \sum_{u\in\operatorname{children}(v)}\DP_u[ 0 ]$ and we additionally add the $\min\{\ub(v) - i,\operatorname{size}(H)\}$ highest values stored in $H$. That is, the number of refugees we can house in the sub-tree rooted in $v$ is the sum of at most $\ub(\assgn^{-1}(v)) - i$ sub-trees whose root is occupied, and the rest of sub-trees roots have to be unoccupied to not exceed the upper-bound of the inhabitant occupying $v$. By using the max-heap, we ensure that we indeed obtain the maximum possible value.
	
	Let $T_1,\ldots,T_k$ be connected components of $G$. Since $G$ is a forest, each component is a tree. Therefore, we use the preceding algorithm to fill the dynamic programming tables for all $\operatorname{r}(T_i)$, $i\in[k]$. Now, if $\sum_{i\in[k]}\max_{j\in\{0,1\}}\DP_{\operatorname{r}(T_i)}[j]$ is at least $\R$, then we return \Yes. Otherwise, we return \No. The computation is constant in leaves, takes linear time (with respect to the maximum degree) in empty internal vertices, and $\Oh{\Delta\cdot\log \Delta}$ in occupied internal vertices due to the heap, finishing the proof.
\end{proof}

With \Cref{thm:forest:poly} in hand, we can finally give an improved algorithm for graphs of maximum degree two. Recall that this provides a clear dichotomy between tractable and intractable cases with respect to the maximum degree of topology.

\begin{theorem}
	\label{thm:maxdeg2:poly}
	Every instance of \ARHshort where the topology is a graph with $\Delta \leq 2$ can be solved in $\Oh{n}$ time.
\end{theorem}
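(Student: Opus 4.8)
The plan is to exploit that a graph of maximum degree at most two is a vertex-disjoint union of paths and cycles, combined with the observation that feasibility is downward closed: if $\pi$ is an inhabitants-respecting housing then so is every $\pi'\subseteq\pi$, since decreasing the number of housed refugees only relaxes the upper-bound constraints. Hence the set of sizes attained by inhabitants-respecting housings is exactly $\{0,1,\dots,M\}$, where $M$ is the maximum size of such a housing, and, since the connected components of the topology are mutually independent, $M$ equals the sum of the per-component maxima. Consequently the instance is a \YesI if and only if $\R\le M$ (and if so, a solution of size exactly $\R$ is obtained from a maximum one by discarding $M-\R$ refugees arbitrarily), so it suffices to compute the maximum inhabitants-respecting housing of every connected component in total linear time.

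I would first apply the standard preprocessing: by \cite[Prop.~3.5]{KnopS2023} we may assume $G$ is bipartite with parts $V_\I$ and $V_U$, and by \Cref{lem:remove-intolerant} we may assume there are no intolerant inhabitants, so every inhabitant has upper-bound at least $1$; neither step raises the maximum degree, and after re-decomposing, each component is a path or a cycle. A path component is solved directly by the dynamic-programming algorithm of \Cref{thm:forest:poly}, whose running time is linear in the component when $\Delta\le 2$. A cycle component, being bipartite with bipartition $(V_\I,V_U)$, must alternate between the two sides, so it has empty vertices $e_1,\dots,e_k$ and inhabitants $a_1,\dots,a_k$ with $a_i$ adjacent exactly to $e_i$ and $e_{i+1}$ (indices modulo $k$) and $\ub(a_i)\in\{1,2\}$. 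A housing of this component corresponds to a set $S\subseteq\{e_1,\dots,e_k\}$ that, for every $i$ with $\ub(a_i)=1$, avoids containing both $e_i$ and $e_{i+1}$; equivalently, $S$ is an independent set in the subgraph $H$ of the cycle $C_k$ on $e_1,\dots,e_k$ consisting of those edges $\{e_i,e_{i+1}\}$ for which $\ub(a_i)=1$. If $\ub(a_i)=1$ for all $i$ then $H=C_k$ and the maximum is $\lfloor k/2\rfloor$; otherwise $H$ is a disjoint union of paths and the maximum is $\sum_j\lceil p_j/2\rceil$ over the sizes $p_j$ of these paths. In both subcases the value, and an optimal housing, are found in $\Oh{k}$ time.

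It remains to collect the running time. Decomposing $G$ and classifying each component as a path or a cycle costs $\Oh{n}$, since $\Delta\le 2$ gives $|E|\le n$; each component is then processed in time linear in its size, via the $\Delta\le 2$ specialisation of \Cref{thm:forest:poly} for paths and the constant-per-vertex independence computation above for cycles; and summing the per-component maxima and comparing to $\R$ is $\Oh{n}$. I expect no genuine algorithmic difficulty here; the one point that needs care is the reduction to the clean structure for cycles — in particular verifying that applying \Cref{lem:remove-intolerant} cannot leave a ``bad'' cycle behind, which holds because deleting an intolerant inhabitant together with its neighbourhood turns any cycle containing it into a path.
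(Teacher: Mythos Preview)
Your argument is correct and follows the same global structure as the paper: reduce to a bipartite graph, remove intolerant inhabitants, decompose into paths and cycles, handle paths via \Cref{thm:forest:poly}, and compare the sum of per-component maxima to~$\R$. The genuine difference is in the treatment of cycle components. The paper picks an inhabitant with upper-bound~$1$ (if one exists), branches on the three possible states of its two neighbours in the solution, and in each branch strips off~$N[v]$ to reduce to a path solved by the forest DP. You instead observe that the constraints on a bipartite cycle $e_1 a_1 e_2 a_2 \cdots e_k a_k$ are precisely independence constraints in the subgraph~$H$ of~$C_k$ whose edges are the pairs $\{e_i,e_{i+1}\}$ with $\ub(a_i)=1$, and then read off the maximum via the closed forms $\lfloor k/2\rfloor$ (if $H=C_k$) or $\sum_j \lceil p_j/2\rceil$ (if $H$ is a union of paths). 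Your route is more self-contained and avoids both the three-way case split and the recursive call to the forest algorithm; the paper's route is more uniform in that it reuses \Cref{thm:forest:poly} as a black box. You also make explicit the downward-closure observation (any subset of a feasible housing is feasible), which the paper uses implicitly when concluding from $\sum_C R_C \ge \R$.
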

\begin{proof}
	First, observe that every bipartite graph of maximum degree $2$ is a collection of paths and even cycles~\cite{Diestel2017}. We assume that the topology is a bipartite graph, the instance is reduced with respect to \Cref{lem:remove-intolerant}, and let $\mathcal{C} = \{C_1,\ldots,C_\ell\}$ be a set of all connected components of~$G$.
	
	For every component $C_i$, $i\in[\ell]$, such that $C_i$ is isomorphic to a path, we can use the algorithm from \Cref{thm:forest:poly} to compute the maximum number $R_{C_i}$ of refugees that can be housed on vertices of~$C_i$ in $\Oh{n}$ time, since $\Delta = 2$.
	
	Next, we deal with components that are isomorphic to even cycles. Let $C\in\mathcal{C}$ be such a component. If all inhabitants assigned to $C$ have upper-bound equal to~$2$, then all empty vertices of $C$ can be occupied by refugees, and therefore we set $R_C = |V(C) \cap V_U|$. Otherwise, there is at least one inhabitant $h$ with upper-bound equal to $1$. Let $v$ be the vertex to which the inhabitant~$h$ is assigned, and let $\{u,w\} = N(v)$. Now, there are three possible shapes of $h$'s neighbourhood in hypothetical solution $\pi$:
	\begin{enumerate}
		\item neither $u$ nor $w$ are part of $\pi$. In this case, we can remove $N[v]$ from $C$ and compute the value $R^1_c$, which is the maximum number of refugees we can house on the remaining path using the algorithm from \Cref{thm:forest:poly} in linear time.
		\item $u$ is part of $\pi$, while $w$ is not. We again remove $N[v]$ from~$C$. Additionally, we decrease the upper-bound of the vertex $x \in N(u)\setminus\{v\}$ by one and apply \Cref{lem:remove-intolerant} as this step can create an intolerant inhabitant. What remains is a path; we use the algorithm from \Cref{thm:forest:poly} to compute $R'_C$ and set $R^2_C = 1 + R'_C$.
		\item $w$ is part of $\pi$, while $u$ is not. We compute $R^3_C$ using the symmetric procedure as in the previous case.
	\end{enumerate}
	Finally, we set $R_C = \max\{R^1_C,R^2_C,R^3_C\}$. As every case can be computed in linear time, we see that computation for even cycles can also be done in linear time.
	As our last step, we check whether $\sum_{C\in\mathcal{C}} R_C \geq \R$. If that is the case, we are dealing with \YesI, otherwise, the result is \No.
\end{proof}

\subsection{Natural Parameters}

In this section, we turn our attention to the natural parameters of the problem. In particular, we study the complexity of the problem when there is a small number of refugees, inhabitants, or empty vertices.

The first question arising in this direction is about parameterized complexity with respect to the number of refugees. Knop and Schierreich~\cite{KnopS2023} showed that the general case of anonymous refugee housing is \Wh[2] with respect to this parameter. Is our restriction strong enough to make the problem tractable? Unfortunately not, as we show in the following result.

\begin{theorem}
	\label{thm:R:Whard}
	The \ARHshort problem is \Wh parameterised by the number of refugees~$R$, there is an algorithm deciding the \ARHshort problem in~$n^\Oh{\R}$ time%
	, and, unless ETH fails, there is no algorithm with running time~$f(\R)\cdot n^{o(\R)}$ for any computable function~$f$.
\end{theorem}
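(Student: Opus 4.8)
The plan is to prove the three parts separately, reusing the reduction from the proof of \Cref{thm:NPh}.

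For the \XP upper bound, I would argue by brute force: a housing is an $\R$-sized subset of $V_U$, so there are at most $\binom{|V_U|}{\R}\le n^{\R}$ candidates, and by the verification argument already given at the end of the proof of \Cref{thm:NPh} we can test each candidate for being inhabitants-respecting in polynomial time. Enumerating all candidates and returning \Yes as soon as an inhabitants-respecting one is found runs in $n^{\Oh{\R}}$ time.

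For the \Whness, the key observation is that the reduction from \textsc{Independent Set} in the proof of \Cref{thm:NPh} is already a \emph{parameterised} reduction for the parameter $\R$: given $(G,k)$ it outputs an \ARHshort instance with exactly $\R=k$ refugees, and the constructed topology $G'$ has $|V(G)|+|E(G)|=\Oh{|V(G)|^2}$ vertices, hence size polynomial in $|V(G)|$. Since \textsc{Independent Set} parameterised by the solution size is \Wh[1] on general graphs, the \ARHshort problem is \Wh[1] parameterised by $\R$. (For this claim we invoke \textsc{Independent Set} on arbitrary graphs rather than the planar bounded-degree variant used in \Cref{thm:NPh}, since the latter is \FPT; the reduction of \Cref{thm:NPh} works verbatim without the planarity and degree restrictions.)

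For the running-time lower bound, recall the standard fact that, unless ETH fails, \textsc{$k$-Independent Set} admits no algorithm with running time $f(k)\cdot n^{o(k)}$. If \ARHshort had an algorithm running in time $f(\R)\cdot n^{o(\R)}$, then composing it with the reduction above — which maps an $n$-vertex graph to an \ARHshort instance of size $n'=\Oh{n^2}$ with $\R=k$ — would decide \textsc{$k$-Independent Set} in time $f(k)\cdot (n^2)^{o(k)}=f(k)\cdot n^{o(k)}$, contradicting ETH. Hence no such algorithm for \ARHshort exists.

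I expect no genuinely hard step here: the needed reduction is already in hand, so the work reduces to checking that it preserves the parameter ($\R=k$) and blows up the instance only polynomially, that it produces no intolerant inhabitants (all have upper-bound $1$, so it is consistent with our standing assumption), and to citing the known \Wh[1]-hardness and ETH lower bound for \textsc{Independent Set}. The only subtlety worth flagging explicitly is the switch from the planar, degree-$3$ version of \textsc{Independent Set} (which is \FPT and thus useless for \Whness) to the general version.
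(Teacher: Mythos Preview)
Your proposal is correct and follows essentially the same approach as the paper: brute-force enumeration for the \XP upper bound, and reuse of the \textsc{Independent Set} reduction from \Cref{thm:NPh} (with $\R=k$) for both the \Whness and the ETH lower bound. Your explicit remark that one must invoke \textsc{Independent Set} on general graphs rather than the planar degree-$3$ variant is a worthwhile clarification that the paper leaves implicit.
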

\begin{proof}
	The \Whness directly follows from the proof of \Cref{thm:NPh}, as we used~$R=k$ and the reduction was polynomial time.
	
	Next, the algorithm is a simple brute-force. There are~$\binom{V_U}{\R}\in n^\Oh{\R}$ possible~$R$-sized subsets of empty houses and for every such subset~$\pi'$, we can check in polynomial-time whether~$\pi'$ is inhabitants-respecting. Since we tried all possible solutions, the algorithm is indeed correct.
	
	Finally, we show the running time lower-bound. It is known that the \textsc{Independent Set} problem cannot be solved in $f(k)\cdot n^{o(k)}$ time unless the ETH fails~\cite{ChenHKX2006}. For the sake of contradiction, suppose that there is an algorithm~$\mathbb{A}$ solving \ARHshort in $f(\R)\cdot n^{o(\R)}$ time. Suppose the following algorithm for \textsc{Independent Set}. Given an instance~$\mathcal{I}$, we use the construction of the proof of \Cref{thm:NPh} to obtain an equivalent instance~$\mathcal{J}$ of \ARHshort. Then we solve $\mathcal{J}$ using the algorithm~$\mathbb{A}$ in $f(R)\cdot n^{o(R)}$ time, and based on the response of $\mathbb{A}$, we decide $\mathcal{I}$. As the construction of $\mathcal{J}$ runs in polynomial time and $R=k$, this gives us an algorithm for the \textsc{Independent Set} problem running in $f(k)\cdot n^{o(k)}$ time. This contradicts ETH, which is unlikely.
\end{proof}

The next result, which is of great importance to us, is an \FPT algorithm with respect to the number of inhabitants. It follows directly from the ILP formulation of the more general model of anonymous refugee housing; however, we use it as a sub-procedure of our algorithms in future sections, and therefore we state it formally.

\begin{lemma}[Knop and Schierreich~\cite{KnopS2023}]\label{lem:ARH:FPT:I}
	The \ARHshort problem is fixed-parameter tractable when parameterised by the number of inhabitants~$|\I|$.
\end{lemma}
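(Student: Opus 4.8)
The plan is to encode \ARHshort as an integer linear program (ILP) whose number of variables is bounded by a function of $|\I|$, and then to invoke the classical fact that ILP feasibility is fixed-parameter tractable in the number of variables (Lenstra's algorithm, later improved by Kannan and by Frank and Tardos).

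First, by the bipartiteness observation preceding \Cref{lem:housing-non-existence-intolerant}, I would assume that $G$ is bipartite with one side equal to $V_\I$ and the other equal to $V_U$. For an empty vertex $v\in V_U$, call the set of inhabitants adjacent to $v$ (that is, $\{h\in\I\mid \assgn(h)\in N_G(v)\}$) the \emph{type} of $v$. Two empty vertices of the same type are interchangeable from the point of view of every inhabitant, since an inhabitant $h$ only cares about $|N_G(\assgn(h))\cap\pi|$. There are at most $2^{|\I|}$ distinct types; for each type $S\subseteq\I$ let $n_S$ denote the number of empty vertices of type $S$.

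Next, I would introduce for every type $S$ an integer variable $x_S$ with $0\le x_S\le n_S$, intended to count how many refugees are housed on vertices of type $S$. A housing that selects $x_S$ vertices of type $S$ for each $S$ is inhabitants-respecting if and only if $\sum_{S\,:\,h\in S} x_S \le \ub(h)$ holds for every inhabitant $h\in\I$, and it has the correct size if and only if $\sum_S x_S = \R$. This ILP has at most $2^{|\I|}$ variables, and $|\I|+1$ constraints besides the box constraints, so its feasibility can be decided in $f(|\I|)\cdot n^{\Oh{1}}$ time by a Lenstra-type algorithm. Conversely, from any feasible integral assignment $(x_S)_S$ one reconstructs a solution by, for each type $S$, arbitrarily selecting $x_S$ of the $n_S$ empty vertices of that type; the box constraints guarantee this is possible, the size constraint gives $|\pi|=\R$, and the inhabitant constraints guarantee the resulting housing is inhabitants-respecting. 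This establishes the equivalence, and hence the claimed \FPT running time.

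The only mildly delicate point is the bookkeeping: one must record the type of an empty vertex as a subset of the inhabitant set rather than of the (a priori larger) vertex set, so that the variable count is $2^{|\I|}$ and not $2^{|V|}$; once the bipartite normal form is in place this is immediate. I do not expect any real obstacle, since the lemma is essentially the ILP argument already used for the more general anonymous refugee housing problem in~\cite{KnopS2023}, specialised to upper-bound preferences, and the proof merely spells out that specialisation.
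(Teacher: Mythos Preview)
Your proposal is correct and matches the paper's approach: the paper does not give its own proof of this lemma but explicitly notes that it ``follows directly from the ILP formulation of the more general model of anonymous refugee housing'' in~\cite{KnopS2023}, which is precisely the type-partition plus Lenstra argument you spell out. The only remark is that the paper treats this as a citation rather than a proved statement, so your write-up is in fact more detailed than what appears there.
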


The next natural parameter that arises in the context of the \ARHshort problem is the number of empty vertices. For this parameterisation, the problem can be solved in \FPT time using a simple $2^\Oh{|V_U|}$ algorithm as shown by~\cite{KnopS2023}. Moreover, this algorithm is best possible assuming the Exponential-Time Hypothesis.

As the final result of this section, we identified a new natural parameter $\xi$, which is the number of empty vertices minus the number of refugees to house. It is easy to see that in our model, if the number of refugees is equal to the number of empty vertices (and therefore $\xi=0$), then the problem can be decided in polynomial time, as there is only one possible housing, and checking whether this housing is inhabitants-respecting is also polynomial time. 

\begin{theorem}
	\label{thm:remaining:XP}
	The \ARHshort problem is in \XP when parameterised by the number of extra empty houses $\xi = |V_U| - \R$.
\end{theorem}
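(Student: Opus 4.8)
The plan is to brute-force over the \emph{complement} of the housing rather than over the housing itself. A housing is a set $\pi \subseteq V_U$ with $|\pi| = \R$; by the definition of a topology we have $|V_U| = |V| - |\I| \geq (|\I| + \R) - |\I| = \R$, so $\xi = |V_U| - \R \geq 0$ on every instance, and the set $S = V_U \setminus \pi$ of empty houses left unoccupied has size exactly $\xi$. Conversely, every $S \in \binom{V_U}{\xi}$ yields a candidate housing $\pi_S = V_U \setminus S$ of size exactly $\R$, and $S \mapsto \pi_S$ is a bijection between $\binom{V_U}{\xi}$ and the set of all housings of the instance.

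The algorithm iterates over all $S \in \binom{V_U}{\xi}$; for each such $S$ it forms $\pi_S = V_U \setminus S$ and verifies whether $\pi_S$ is inhabitants-respecting by enumerating the inhabitants and comparing $|N_G(\assgn(h)) \cap \pi_S|$ against $\ub(h)$ for every $h \in \I$. It returns \Yes exactly when some $\pi_S$ passes this test, and \No otherwise. Correctness is immediate from the bijection above: an inhabitants-respecting housing of size $\R$ exists if and only if one of the sets $\pi_S$ is inhabitants-respecting, and the algorithm checks all of them.

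For the running time, there are $\binom{|V_U|}{\xi} \leq n^{\xi}$ choices of $S$, and the feasibility test for a fixed $\pi_S$ runs in $n^{\Oh{1}}$ time (as already argued in the proof of \Cref{thm:NPh}), so the total running time is $n^{\Oh{\xi}}$, which places the problem in \XP. There is no real obstacle here; the only point that needs to be made explicitly is the (elementary) observation that it is cheaper to branch on the at most $\xi$ houses we leave empty than on the $\R$ houses we fill, together with the remark that $\xi$ is always nonnegative so that this reformulation is well defined.
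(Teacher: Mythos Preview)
Your proof is correct and takes essentially the same approach as the paper: enumerate all $\binom{|V_U|}{\xi} \in n^{\Oh{\xi}}$ choices for the $\xi$ empty houses that remain unoccupied, and verify the unique resulting housing in polynomial time. Your write-up is in fact slightly more careful than the paper's, explicitly justifying $\xi \geq 0$ and the bijection between complements and housings.
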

\begin{proof}
	The algorithm is a simple brute-force. We guess a set of empty houses $F\subseteq V_U$ and for each such set $F$, we create an instance with a topology $G\setminus F$. For this new topology, we have $|V_U| = \R$ and, therefore, there is only one solution that can be verified in polynomial time. Note that there are $n^\Oh{\xi}$ such sets~$F$, which concludes the proof.
\end{proof}

We find the question of whether the algorithm from \Cref{thm:remaining:XP} can be turned into an \FPT one very intriguing, but all our attempts to achieve this have so far been unsuccessful.

\subsection{Restricting the Topology}

Now, we turn our attention to different restrictions of the topology. Here, we follow the usual approach of distance to triviality~\cite{GuoHN2004,Niedermeier2010,AgrawalR2022} -- we identify polynomial-time solvable graph families and then turn our attention to structural restrictions that are ``not far'' from these trivial cases.

\subsubsection{Tree-Like Networks.}

\noindent The \Cref{thm:forest:poly} shows that \ARHshort can be decided efficiently if the topology is a forest. It is natural to ask whether this algorithm can be generalised to a larger class of graphs that are similar to forests. Natural candidates are graphs of bounded tree-width, since a graph has tree-width equal to $1$ if and only if the graph is isomorphic to a forest~\cite{RobertsonS1986}. We resolve this question in a partly positive way in the following theorem.

\begin{theorem}
	\label{thm:tw:XP}
	The \ARHshort problem is in \XP when parameterised by the tree-width $\tw$ of the topology.
\end{theorem}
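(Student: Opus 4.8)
The plan is to design a leaf-to-root dynamic programming algorithm over a tree decomposition $\mathcal{T}=(T,\beta,r)$ of the topology $G$, generalising the forest algorithm of \Cref{thm:forest:poly}. First I would preprocess the input: compute a tree decomposition of width $\tw$ (e.g.\ via a standard $\tw$-approximation, which is enough since we only aim for \XP), convert it to a nice tree decomposition, and apply the reductions of \Cref{lem:remove-intolerant} so that no inhabitant is intolerant. Recall also that by the remark after \Cref{ex:model} we may assume $G$ is bipartite with inhabitant-occupied vertices on one side and empty vertices on the other, which simplifies the bookkeeping.

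The state of the dynamic program at a node $x$ with bag $\beta(x)$ will record: (i) for each empty vertex $v\in\beta(x)\cap V_U$, a bit saying whether $v$ is selected into the housing $\pi$; (ii) for each occupied vertex $v\in\beta(x)\cap V_\I$ with inhabitant $h=\assgn^{-1}(v)$, the number of already-placed refugees in $N_G(v)$ counted so far in the processed part of the graph --- i.e.\ an integer in $[\ub(h)]_0$, since any partial count exceeding $\ub(h)$ is infeasible and can be discarded; and (iii) the total number of refugees placed so far in the subgraph $G_x$ induced by the vertices introduced at or below $x$. For each such state we store a boolean indicating feasibility (or equivalently the maximum achievable total, so that at the root we just check whether $\R$ is reachable). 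The number of states per node is at most $2^{\tw+1}\cdot(\R+1)^{\tw+1}\cdot(\R+1)\le 2^{\tw+1}\cdot n^{\tw+2}$, which is $n^{f(\tw)}$, giving an \XP bound.

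Next I would spell out the transitions for the four node types of a nice tree decomposition. For a \emph{leaf} node the bag is empty (or a single vertex) and initialisation is trivial. For an \emph{introduce} node adding a vertex $v$: if $v\in V_U$ we branch on whether $v\in\pi$, and if it is selected we increment the partial counts of every already-present neighbour $u\in\beta(x)\cap V_\I$ of $v$, rejecting the state if some count would exceed $\ub(\assgn^{-1}(u))$; if $v\in V_\I$ we add it with partial count equal to the number of its neighbours already in the bag that are marked selected. For a \emph{forget} node removing $v$: if $v$ is an occupied vertex we may only keep states where its final count is $\le\ub$ (in bipartite form all of $v$'s neighbours are empty vertices, and by the connectivity property of tree decompositions every neighbour of $v$ appears together with $v$ in some bag, so by the time we forget $v$ all its neighbours have been processed, hence the count is final); we then project out $v$'s coordinate, taking a max/OR over collisions. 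For a \emph{join} node with children $x_1,x_2$ sharing bag $\beta(x)$: we combine states that agree on the selection bits of the empty vertices in the bag, sum the two global refugee totals, and for each occupied vertex in the bag add the two partial counts while subtracting the double-counted contribution from neighbours inside $\beta(x)$ itself (to avoid counting a bag-internal selected neighbour twice), again discarding if the result exceeds the upper bound. Standard inductive arguments over the subtree show each stored value is exactly the maximum number of refugees placeable in $G_x$ consistent with the boundary state; at the root $r$ the instance is a \YesI iff some root state has global total $\ge\R$ (and padding with extra empty vertices outside all bags is harmless, or one can simply allow the total to be any value $\le\R$).

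The main obstacle I expect is the join step: getting the inclusion--exclusion on the partial neighbour-counts exactly right when a selected empty vertex sits in the shared bag and is adjacent to an occupied vertex also in the shared bag, so that its contribution is counted once rather than twice (or zero times). One must be careful that the ``partial count'' semantics is defined as the number of selected neighbours of $v$ among vertices of $G_x$, which \emph{includes} the ones currently in $\beta(x)$; then at a join the overlap to subtract is precisely the number of selected empty vertices in $\beta(x)\cap N_G(v)$. Once this convention is fixed consistently across introduce, forget and join nodes, the correctness proof is a routine induction, and the running time is dominated by the per-node work $n^{\Oh{\tw}}$ times the polynomial cost of iterating over pairs of states at join nodes, which is still $n^{\Oh{\tw}}$ overall, as claimed.
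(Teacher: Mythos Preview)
Your proposal is correct and follows essentially the same approach as the paper: dynamic programming over a nice tree decomposition, with states recording selection bits for empty bag vertices, neighbour-counts for occupied bag vertices, and the total number of refugees placed so far. The only notable difference is that the paper additionally carries a ``future'' index $F$ that guesses, for each occupied bag vertex, how many not-yet-introduced neighbours will be selected and enforces the upper bound already at introduce time; your version omits this and checks the bound only when an occupied vertex is forgotten, which is simpler and---thanks to the observation you make that all neighbours of an occupied vertex have been introduced by the time it is forgotten---equally correct.
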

\begin{proof}
	\newcommand{\DP}{\ensuremath{\operatorname{DP}}}
	As is usual for algorithms on bounded tree-width graphs, the algorithm is leaf-to-root dynamic programming along a nice tree decomposition. An optimal tree decomposition of width $k$ can be found in $2^\Oh{k^2}\cdot n^\Oh{1}$ time using the algorithm of~Korhonen and Lokshtanov~\cite{KorhonenL2023}.
	
	In every node $x$ of the tree decomposition, we store a dynamic programming table $\DP_x[A,P,F,\rho]$ (we use $\beta(x)$ to denote the set of vertices associated with node $x$), where
	\begin{itemize}
		\item $A\subseteq V_U\cap\beta(x)$ is the set of empty vertices used in the solution,
		\item $P\colon V_I\cap\beta(x)\to[n]_0$ is the number of already forgotten (or past) neighbours of each $v\in V_\I\cap\beta(x)$ used to house refugees in the solution,
		\item $F\colon V_I\cap\beta(x)\to[n]_0$ is the number for every $v\in V_\I\cap\beta(x)$ of not yet observed neighbours that will be used to house refugees in the solution, and
		\item $\rho\in[\R]_0$ is the number of refugees we want to house in this particular sub-tree.
	\end{itemize}
	We call the quadruple $(A,P,F,\rho)$ a \emph{signature}. The value stored in the dynamic programming table $\DP_x$ for every signature $(A,P,F,\rho)$ is \texttt{true} if and only if there exists an inhabitants-respecting \emph{partial housing} $\pi_x\subseteq V^x\cap V_U$ such that $|\pi_x| = \rho$, $\pi_x \cap \beta(x) = A$, and for every $w\in \beta(x)\cap V_\I$ we have $|(\pi_x \cap N(w))\setminus A| = P(w)$ and $P(w) + F(w) + |A\cap N(w)| \leq \ub(\assgn^{-1}(w))$. If such a partial housing does not exist, we store \texttt{false}. Once the dynamic programming table $\DP_r$ for the root node $r$ is computed correctly, we simply check whether $\DP_r[\emptyset,\emptyset,\emptyset,R]$ is \texttt{true}. 
	
	Now, we describe how the algorithm proceeds. We describe the computation separately for each node type of the nice-tree decomposition.
	
	\paragraph{Leaf Node.} Let $x$ be a leaf node. By definition, the leaf nodes are empty and, therefore, we cannot house any refugees in the subtree rooted in $x$. Consequently, we set
	\[
	\DP_x[A,P,F,\rho] = \begin{cases}
		\texttt{true} & \text{if } \rho = 0,\\
		\texttt{false} & \text{otherwise.}
	\end{cases}
	\]
	
	\begin{lemma}\label{claim:tw:leaf}
		The dynamic programming for every leaf node $x$ is filled correctly.
	\end{lemma}
	\begin{claimproof}
		We show that the dynamic programming table $\DP_x$ stores \texttt{true} if and only if there is a partial housing $\pi_x$ compatible with the given signature $(A,P,F,\rho)$. For the left-right direction, assume that $\DP_x[A,P,F,\rho]$ is \texttt{true}. By the definition of the computation, this can happen only if the $\rho = 0$, that is, there is no refugee to house. Consequently, an empty housing $\pi_x$ is clearly compatible with $(A,P,F,\rho)$. In the opposite direction, let there be a partial housing compatible with some signature $(A,P,F,\rho)$. Since $G^x$ is an empty graph, the housing is clearly empty. Hence, $\rho = 0$. However, in such a case, we clearly store \texttt{true} in the cell $\DP_x[A,P,F,\rho]$.
	\end{claimproof}
	
	\paragraph{Introduce Empty Vertex Node.} Let $x$ be an introduce node that introduces an empty node $v$. If $v\not\in A$, then this vertex is irrelevant to the solution, and we can remove it completely from the topology. Otherwise, for every neighbour of $v$, a new vertex of future houses moved to actual houses. Therefore, we copy the value of $\DP_y$, where the value of $F(w)$ for every $w\in N(v)$ is higher by $1$ compared to the guessed value of $F(w)$. This yields the following:
	\[
	\DP_x[A,P,F,\rho] = \begin{cases}
		\DP_y[A,P,F,\rho] & \text{if } v\not\in A,\\
		\texttt{false} & \text{if } \exists w\in N(v)\cap\beta(x)\colon F(w) = n,\\
		\texttt{false} & \text{if } \rho = 0\text{, and} \\
		\DP_y[A\setminus\{v\},P,F',\rho-1] & \text{otherwise,}
	\end{cases}
	\]
	where $\forall w\in\beta(x)\cap V_\I$ we have $F'(w) = F(w)$ if $w\not\in N(v)$ and $F'(w) = F(w) + 1$ if $w\in N(v)$. Note that the second and third cases are distinguished just for technical reasons, as without them, the signature we use to question $y$'s table in case four would be invalid.
	
	\begin{lemma}\label{claim:tw:introduceEmpty}
		Let $x$ be an introduce empty vertex node with a single child node $y$. Assuming that the dynamic programming table $\DP_y$ is computed correctly, dynamic programming $\DP_x$ is also correct.
	\end{lemma}
	\begin{claimproof}
		Let $\DP_x[A,P,F,\rho]$ be \texttt{true}. Assume first that $v\not\in A$. Then, according to the definition of the computation, $\DP_y[A,P,F,\rho]$ is \texttt{true}. According to the induction hypothesis, there exists a partial housing $\pi_y$ compatible with $(A,P,F,\rho)$. Since the signature is the same, $\pi_y$ is the sought partial housing. Now, let $v\in A$. Since $\DP_x[A,P,F,\rho]$ is \texttt{true}, it holds that $\DP_y[A\setminus\{v\},P,F',\rho-1]$ is \texttt{true}. By the induction hypothesis, there exists a partial housing $\pi_y$ compatible with $(A\setminus\{v\},P,F',\rho-1)$. We set $\pi_x = \pi_y \cup \{v\}$ and claim that it is a partial solution compatible with $(A,P,F,\rho)$. Clearly, we increased the housing size by exactly one. The inhabitants occupying the non-neighbours of $v$ are still satisfied with their neighbourhoods, as housing a refugee in $v$ does not affect them. Let $w\in N(v)$ be a neighbour of $v$. Since $\pi_y$ is a partial housing compatible with $(A\setminus\{v\},P,F',\rho-1)$, it holds that $\ub(\assgn^{-1}(w)) \geq P(w) + F'(w) + |(A)\setminus\{v\} \cap N(w)|$. Observe that in $\pi_x$, the $\assgn^{-1}(w)$'s upper-bound is not exceeded, as it has at most $P(w) + F(w) + |A\cap N(w)| = P(w) + F'(w) - 1 + |(A\setminus\{v\})\cap N(w)| + 1 = P(w) + F'(w) + |(A\setminus\{v\}) \cap N(w)|$ refugees in the neighbourhood. Therefore, $\pi_x$ is indeed a partial solution compatible with $(A,P,F,\rho)$.
		
		In the opposite direction, let there be a partial housing $\pi_x$ compatible with some signature $(A,P,F,\rho)$. We again distinguish two cases based on the presence of $v$ in $A$. First, suppose that $v\not\in A$. Observe that since the partial housing does not use $v$, then it remains also a partial housing compatible with $(A,P,F,\rho)$ on a graph $G^x\setminus\{v\} = G^y$. By the induction hypothesis, $\pi^x$ is also compatible with $(A,P,F,\rho)$ in $y$ and consequently, $\DP_y[A,P,F,\rho]$ is set to \texttt{true}. By the definition of the computation, also $\DP_x[A,P,F,\rho]$ is set to \texttt{true} since it only takes over the value stored in $\DP_y[A,P,F,\rho]$ in this case. It remains to show the case where $v\in A$. Since $\pi_x$ is a partial housing, every inhabitant of $G^x$ approves the number of refugees housed in its neighbourhood. Specifically, for each inhabitant $v\in V_I \cap \beta(x)$, it holds that $P(v) + F(v) + |A| \leq \ub(\assgn^{-1}(v))$. If we remove $v$ from the graph, we clearly obtain the graph $G^y$. By setting $\pi_y = \pi_x \setminus \{v\}$, we clearly obtain a partial housing in $y$: The non-neighbours of $v$ are the same, and for the neighbours of $v$, we only increased the number of refugees in their neighbourhoods. Moreover, $\pi_y$ is clearly compatible with the signature $(A\setminus\{v\},P,F',\rho)$, where $F'(w) = F(w)$ for every $w\not\in N(v)$ and $F'(w) = F(w) + 1$ otherwise. We assumed that the dynamic programming table for the child node $y$ is computed correctly and, therefore, $\DP_y[A\setminus\{v\},P,F',\rho]$ is \texttt{true}. By the definition of the computation, also $\DP_x[A,P,F',\rho]$ is \texttt{true}, as we are in the fourth case of the recurrence.
	\end{claimproof}
	
	\paragraph{Introduce Occupied Vertex Node.} If $x$ is an introduce node introducing an occupied vertex $v$, then clearly, the stored value should be \texttt{false} whenever $P(v) > 0$, as $v$ cannot have any already forgotten neighbours in the solution by the definition of tree decomposition. The stored value is also \texttt{false} if the number of neighbours of $v$ in $A$ and the number of future refugees in the neighbourhood of~$v$ exceeds the upper-bound of an inhabitant occupying $v$. In all the remaining cases, $v$ is satisfied in the solution, and hence, we need to check the stability of the other inhabitants in the bag. However, this can be read from the dynamic programming table of node $y$. Thus, we get the following computation:
	
	\[
	\DP_x[A,P,F,\rho] = \begin{cases}
		\texttt{false} & \text{if } P(v) \not= 0\\
		\texttt{false} & \text{if } |A\cap N(v)|+F(v) > \ub(\assgn^{-1}(v))\\
		\DP_y[A,P_{\downarrow v},F_{\downarrow v},\rho] & \text{otherwise,}
	\end{cases}
	\]
	where $P_{\downarrow v}$ and $F_{\downarrow v}$ means that $v$ is excluded from the domain of the respective function.
	
	\begin{lemma}\label{claim:tw:introduceOccupied}
		Let $x$ be an introduce an occupied vertex node with child $y$. Assuming that the dynamic programming table $\DP_y$ is computed correctly, the dynamic programming table for $x$ is also correct.
	\end{lemma}
	\begin{claimproof}
		First, let $\DP_x[A,P,F,\rho]$ be $\mathtt{true}$. By the definition of the computation, this holds if and only if $P(v) = 0$, $|A\cap N(v)| + F(v) \leq \ub(\assgn^{-1}(v))$, and $\DP_y[A,P_{\downarrow v},F_{\downarrow v},\rho] = \mathtt{true}$. By assumption, the dynamic programming table $\DP_y[A,P_{\downarrow v},F_{\downarrow v},\rho]$ is computed correctly and, therefore, there exists a partial housing $\pi_y$ compatible with the signature $(A,P_{\downarrow v},F_{\downarrow v},\rho)$. We set $\pi_x = \pi_y$ and claim that it is indeed a partial housing compatible with $(A,P,F,\rho)$. Clearly, by the definition of tree decomposition, there is no neighbour of $v$, which is already forgotten in $G^x$. Hence, $P(v) = 0$ holds. Next, the number of refugees already assigned to the neighbourhood of $\assgn^{-1}(v)$ and the number of refugees that will be housed in its neighbourhood in the future is at most $\assgn^{-1}(v)$'s upper-bound, otherwise the stored value would be \texttt{false}. Hence, the partial housing~$\pi_x$ is inhabitants-respecting from the perspective of the inhabitant assigned to $v$. As it was also inhabitants-respecting from the perspective of all the remaining inhabitants, the correctness of $\pi_x$ follows.
		
		In the opposite direction, let $(A,P,F,\rho)$ be a signature and $\pi_x$ be a corresponding partial housing. Since $\pi_x$ is a partial housing, the number of refugees currently assigned to the neighbourhood of the inhabitant occupying $v$ plus the number of refugees expected to be assigned to the $\assgn^{-1}(v)$'s neighbourhood in the future is at most its upper-bound. Hence, we can clearly remove the vertex $v$ from the graph and obtain exactly the graph $G^y$. As all inhabitants of $G^y$ were satisfied in $\pi^x$, they remain the same if we remove $v$. Consequently, $\pi_x$ is also a partial housing corresponding to the signature $(A,P_{\downarrow v},F_{\downarrow v},\rho)$ in $y$. Since we assumed that the dynamic programming table $\DP_y$ is computed correctly, it necessarily follows that $\DP_y[A,P_{\downarrow v},F_{\downarrow v},\rho]$ is \texttt{true}. However, by the definition of the computation, $\DP_x[A,P,F,\rho]$ is also set to \texttt{true}.
	\end{claimproof}
	
	\paragraph{Forget Empty Vertex Node.} If we are forgetting an empty vertex, we need to decide whether $v$ can or cannot be part of the solution housing. This can be easily read from the dynamic programming table of the child node $y$.
	\[
	\DP_x[A,P,F,\rho] = \begin{cases}
		\DP_y[A,P,F,\rho] & \text{if } \exists w\in N(v)\cap\beta(x)\colon P(w) = 0,\\
		\DP_y[A,P,F,\rho] \lor \DP_y[A\cup\{v\},P',F,\rho] & \text{otherwise.}
	\end{cases} 
	\]
	where $\forall w\in \beta(x)\cap V_\I$ we set $P'(w) = P(w)$ if $w\not\in N(v)$ and $P'(w) = P(w) - 1$ if $w\in N(v)$.
	
	\begin{lemma}\label{claim:tw:forgetEmpty}
		Let $x$ be a forget an empty vertex node with child $y$. Assuming that the dynamic programming table $\DP_y$ is computed correctly, the dynamic programming table for $x$ is also correct.
	\end{lemma}
	\begin{claimproof}
		Let $\DP_x[A,P,F,\rho]$ be $\mathtt{true}$. Then, according to the definition of the computation, $\DP_y[A,P,F,\rho]$ or $\DP_y[A\cup\{v\},P',F,\rho]$, where $P'(w) = P(w)$ if $w\not\in N(v)$ and $P'(w) = P(w)-1$ otherwise, is \texttt{true}. First, assume that it is the first case. Then there exists a partial housing $\pi_y$ corresponding to the signature $(A,P,F,\rho)$ in $G^y$. Since $G^x = G^y$, $\pi_y$ is clearly also a solution housing corresponding to $(A,P,F,\rho)$ in $x$. To complete the implication, assume that $\DP_y[A,P,F,\rho] = \mathtt{false}$. Consequently, $\DP_y[A\cup\{v\},P',F,\rho]$ is \texttt{true} and by our assumption there exists a partial housing $\pi_y$ corresponding to $(A\cup\{v\},P',F,\rho)$ in $y$. We claim that $\pi_y$ is also a partial housing corresponding to $(A,P,F,\rho)$. Clearly, $|\pi_y| = \rho$ and $\pi_y \subseteq V^y \subseteq V^x$. It remains to show that the conditions for all the bag vertices are satisfied. For all non-neighbours of $v$, all the conditions are satisfied as they are not affected by the housing of $v$. So, let $w$ be a bag vertex that is a neighbour of $v$. We have $P'(w) + F(w) + |A\cup\{v\}\cap N(w)| \leq \ub(\assgn^{-1}(w))$, which can be rewritten as $P(w) - 1 + F(w) + |A| + 1 = P(w) + F(w) + |A\cap N(w)| \leq \ub(\assgn^{-1}(w))$. This shows that $\pi_y$ does not violate the upper-bound of $w$. Moreover, we have $|(\pi_y \cap N(w))\setminus (A\cup\{v\})| = P'(w)$. This can again be rewritten as $|(\pi_y \cap N(w))\setminus A| - 1 = P(w) - 1$, which implies $|(\pi_y \cap N(w))\setminus A| = P(w)$. Consequently, also the number of past vertices is correct and we have that $\pi_x = \pi_y$ is a partial housing compatible with $(A,P,F,\rho)$.
		
		In the opposite direction, assume that for a signature $(A,P,F,\rho)$, there is a partial housing $\pi_x$. First, assume that $v\not\in \pi_x$. By this, we immediately obtain that $\pi_x$ is also compatible with $(A,P,F,\rho)$ in $y$, as $G^y = G^x$ and the only difference between $x$ and $y$ is in the absence of $v$ in $\beta(x)$. By our assumption, the dynamic programming table $\DP_y$ contains \texttt{true} for the signature $(A,P,F,\rho)$, and therefore also $\DP_x[A,P,F,\rho]$ is \texttt{true}. Next, assume that $v\in \pi_x$. Since $\pi_x$ is correct, clearly $P(w) \geq 1$ for every neighbour $w\in\beta(x)$ of $v$. We claim that $\pi_x$ corresponds to the signature $(A\cup\{v\},P',F,\rho)$, where $P'(w) = P(w)$ if $w\not\in N(v)$ and $P'(w) = P(w)-1$ otherwise. Since $\beta(y) = \beta(x)\cup\{v\}$, clearly $\pi_x\cap\beta(y) = A\cup\{v\}$. The number of housed refugees is trivially correct. Let $w\in\beta(x)$ be a non-neighbour of $v$. We have $P(w) + F(w) + |A\cap N(w)| = P'(w) + F(w) + |(A\cup\{v\}) \cap N(w)| \leq \ub(\assgn^{-1}(w))$ and $|(\pi_x \cap N(w))\setminus A| = |(\pi_x \cap N(w))\setminus (A\cup\{v\})| = P(w) = P'(w)$. Therefore, for non-neighbours, the conditions hold. Let $w\in\beta(y)$ be a neighbour of $v$. We have $P(w) + F(w) + |A \cap N(w)| = P'(w) + 1 + F(w) + |(A\cup \{v\})\cap N(w)| - 1 = P'(w) + F(w) + |(A\cup \{v\})\cap N(w)| \leq \ub(\assgn^{-1}(w))$ and $|(\pi_x \cap N(w))\setminus A| = |(\pi_x \cap N(w))\setminus (A\cap\{v\})| + 1 = P(w) = P'(w) + 1$, and therefore $P'(w) = |(\pi_x \cap N(w))\setminus (A\cap\{v\})|$. Hence, $\pi_x$ is compatible with $(A\cup\{v\},P',F,\rho)$ in $y$. By assumption, $\DP_y[A\{v\},P',F,\rho]$ is \texttt{true} and consequently also $\DP_x[A,P,F,\rho]$ is \texttt{true} by the definition of the computation.
	\end{claimproof}
	
	\paragraph{Forget Occupied Vertex Node.} If an occupied vertex $v$ is forgotten in a node $x$, then we need to check how many inhabitants are, in a hypothetical solution, housed in the neighbourhood of $v$. Formally, we set
	\[
	\DP_x[A,P,F,\rho] = \bigvee_{i\in[n]_0}\DP_y[A,P_{\uparrow (v\mapsto i)},F_{\uparrow (v\mapsto 0)},\rho],
	\]
	where $F_{\uparrow (v\mapsto 0)}$ means that the function $F$ is extended such that for~$v$ it returns $0$. We use the same notation for $P_{\uparrow (v\mapsto i)}$.
	
	\begin{lemma}\label{claim:tw:forgetOccupied}
		Let $x$ be a forget an occupied vertex node with the child $y$. Assuming that the dynamic programming table $\DP_y$ is computed correctly, the dynamic programming table for $x$ is also correct.
	\end{lemma}
	\begin{claimproof}
		Assume that $\DP_x[A,P,F,\rho]$ is $\mathtt{true}$. By the definition of the computation, there exists $i\in[n]_0$ such that $\DP_y[A,P'=P_{\uparrow(v\mapsto i)}, F'=F_{\uparrow(v\mapsto 0)},\rho]$ is \texttt{true}. By assumption, there exists a partial housing $\pi_y$ compatible with $(A,P_{\uparrow(v\mapsto i)}, F_{\uparrow(v\mapsto 0)},\rho)$ in $y$. We claim that $\pi_y$ is also compatible with $(A,P,F,\rho)$. Recall that $\beta(x) = \beta(y)\setminus\{v\}$, $G^x = G^y$, and that the only difference between $P'$ and $P$ (and $F'$ and $F$, respectively) is that the vertex $v$ is removed from the domain of $P'$. So, the values for all $w\in\beta(x)$ are the same in both $P'$ and $P$ ($F'$ and $F$) and therefore it is obvious that $\pi_y$ is also compatible with $(A,P,F,\rho)$.
		
		In the opposite direction, let $(A,P,F,\rho)$ be a signature and $\pi_x$ be a corresponding partial housing. Since $G^x = G^y$ and $\beta(y) = \beta(x)\cup\{v\}$, $\pi_x$ is also a partial housing for some signature $(A,P',F',\rho)$ in $y$. Since $v$ is forgotten in $x$, there is no neighbour $w$ of $v$ such that $v\not\in V^x$. Hence, we can freely assume that $F'(v) = 0$ and $F'(u) = F(u)$ for all $u\in\beta(x)$. Let $X = \pi_x \cap N(v)$ be the set of $v$'s neighbours in the solution. Then for $P'(v) = |X\setminus A|$ and $P'(u) = P(u)$ for all $u\in\beta(x)$, all the conditions for $\pi_x$ being a partial housing compatible with $(A,P',F',\rho)$ are clearly satisfied. Hence, $\DP_y[A,P',F',\rho]$ is necessarily \texttt{true} and therefore, $\DP_x[A,P,F,\rho]$ is also \texttt{true} by the recurrence.
	\end{claimproof}
	
	\paragraph{Join Node.} A join node $x$ is a node with two children $y$ such that $\beta(x) = \beta(y) = \beta(z)$. We set the value of a cell to \texttt{true} if and only if there exists a pair of cells (set to \texttt{true}) in the children $y$ and $z$ such that
	\begin{itemize}
		\item the set of empty houses assumed to be in the solutions is the same for both children,
		\item the number of forgotten vertices in $y$ and in $z$ for every $v\in \beta(x)\cap V_\I$ sum up to $P(v)$,
		\item the functions $F_y$ and $F_z$ reflect that some future vertices come from the other sub-tree and some should be expected in the future, and
		\item the sum of refugees minus $|A|$ (as it is counted twice -- once in $y$ and once in $z$) is equal to the guessed $\rho$.
	\end{itemize}
	Formally, the computation is as follows:
	\[
	\DP_x[ A, P, F, \rho ] = 
	\bigvee_{\mathclap{\substack{
				\rho_y,\rho_z\colon \rho_y + \rho_z = \rho+|A|\\
				P_y,P_z\colon \forall v\in\beta(x)\cap V_\I\colon P_y(v)+P_z(v) = P(v)\\
				F_y,F_z\colon \forall v\in\beta(x)\cap V_\I\colon F_y(v) = F(v) + P_z(v)\,\land\, F_z(v) = F(v) + P_y(v)
	}}}
	\left(\DP_y[A,P_y,F_y,\rho_y]\,\land\,\DP_z[A,P_z,F_z,\rho_z]\right).
	\]
	
	\begin{lemma}\label{claim:tw:join}
		Let $x$ be a join node with two children $y$ and $z$. Assuming that the dynamic programming tables $\DP_y$ and $\DP_z$ are computed correctly, the dynamic programming table for $x$ is also computed correctly.
	\end{lemma}
	\begin{claimproof}
		First, assume that $\DP_x[A,P,F,\rho]$ is \texttt{true}. Then, according to the definition of computation, there exist $P_y$, $Pz$, $F_y$, $F_z$, $\rho_y$, and $\rho_z$ such that $\DP_y[A,P_y,F_y,\rho_y] = \DP_z[A,P_z,F_z,\rho_z] = \mathtt{true}$ and for every $v\in\beta(x)\cap V_I$ we have $\rho_y + \rho_z = \rho + |A|$, $P_y(v) + P_z(v) = P(v)$, and $F(v) = F_y(v) - P_z(v) = F_z(v) - P_y(v)$. By assumption, there exist partial housings $\pi_y$ and $\pi_z$ compatible with $(A,P_y,F_y,\rho_y)$ and $(A,P_z,F_z,\rho_z)$, respectively. We set $\pi_x = \pi_y \cup \pi_z$ and claim that $\pi_x$ is a partial housing compatible with $(A,P,F,\rho)$. Clearly, $\pi_x \subseteq V^x$ as $V^x = V^y \cup V^z$. Moreover, $|\pi_x| = |\pi_y| + |\pi_z| - |A| = \rho_y + \rho_z - |A| = \rho$, as the vertices of $A$ were counted both in $\pi_y$ and $\pi_z$ and the already forgotten vertices that are part of the solution can be part of exactly one partial housing $\pi_y$ or $\pi_z$. By the separator property of the bag $\beta(x)$, we see that in $\pi_x$, all inhabitants assigned to the already forgotten vertices are clearly satisfied with the housing. So, let $v\in\beta(x)$ be a vertex and $h$ be an inhabitant assigned to $v$. First, we verify that the number of past vertices is correct. We have $|(\pi_x\cap N(v))\setminus A| = |(\pi_y\cap N(v))\setminus A| + |(\pi_z\cap N(v))\setminus A| = P_y(v) + P_z(v) = P(v)$, that is, the number of past vertices is correct. Now, we verify that the $v$'s upper-bound is not exceeded. We have $|N(v)\cap \pi_x| + F(v) = P(v) + |A| + F(v) = P_y(v) + P_z(v) + |A| + F_y(v) - P_z(v) = P_y(v) + |A| + F_y(v) \leq \ub(h)$. Therefore,~$\pi_x$ also satisfies the last condition and it follows that $\pi_x$ corresponds to the signature $(A,P,F,\rho)$ in $x$, finishing the left-to-right implication.
		
		In the opposite direction, assume that there is a partial housing $\pi_x$ compatible with some signature $(A,P,F,\rho)$. By the definition of tree decomposition, $V^y \cap V^z = \beta(x)$. We define two partial housings $\pi_y$ and $\pi_z$ such that $\pi_y = \pi_x \cap V^y$ and $\pi_z = \pi_x \cap V^z$. Now, we prove that $\pi_y$ is compatible with some signature $(A,P_y,F_y,\rho_y)$ in $y$ and $\pi_z$ is compatible with some signature $(A,P_z,F_z,\rho_z)$ in $z$, respectively, such that $\rho_y + \rho_z = \rho + |A|$, $P_y(v) + P_z(v) = P(v)$, and $F(v) = F_y(v) - P_z(v) = F_z(v) - P_y(v)$. 
		Let $v\in \pi_x \setminus \beta(x)$. Since $V^y \cap V^z = \beta(x)$, such a past vertex that is in a solution $\pi_x$ is part of exactly one partial housing $\pi_y$ or $\pi_z$. By the definition of $\pi_y$ and $\pi_z$, clearly $\pi_y \subseteq V^y\cap V_U$ and $\pi_z \subseteq V^z\cap V_U$. Moreover, $A \subseteq \beta(x)$ and $\beta(x) = \beta(y) = \beta(z)$ and, therefore, $\pi_y\cap \beta(y) = \pi_z \cap \beta(z) = A$. 
		Now, we focus on the sizes of $\pi_y$ and $\pi_z$. We know that $|\pi_x| = \rho = |A| + |((\pi_x \cap V^x)\setminus A)|$, which can be further decomposed as $|A| + |((\pi_x\cap V^y)\setminus A)| + |((\pi_x\cap V^z)\setminus A)| = |A| + |\pi_y\setminus A| + |\pi_z\setminus A| = |A| + |\pi_y| - |A| + |\pi_z| - |A|$, since we have already shown that $A = \pi_y \cap \pi_z$. Therefore, we have $\rho = |A| + \rho_y - |A| + \rho_z - |A| = \rho_y + \rho_z - |A|$.
		Let $v\in \beta(x)$ be a vertex occupied by an inhabitant $h$. 
		Recall that $\beta(x) = \beta(y) = \beta(z)$. As $\beta(x)$ is a separator in $G^x$, no past vertex is part of both $V^y$ and $V^z$. Hence, clearly $P_y(v) + P_z(v) = P(v)$. Finally, we have $P(v) + F(v) + |A| \leq \ub(h)$. We can see that if we set $F_y(v) = F(v) + P_z(v)$, we obtain $P(v) + F(v) - |A| = P_y(v) + P_z(v) + F_y(z) - P_z(v) - |A| = P_y(v) + F_y(z) - |A| \leq \ub(h)$ and similarly if we set $F_z(v) = F(v) + P_y(v)$. Therefore, the signatures with which $\pi_y$ and $\pi_z$ are consistent with in $y$ and $z$, respectively, satisfy the given conditions. By our assumption, $\DP_y[A,P_y,F_y,\rho_y] = \DP_z[A,P_z,F_z,\rho_z] = \mathtt{true}$ and hence by the definition of computation, $\DP_x[A,P,F,\rho]$ is also set to \texttt{true}. This finishes the proof.
	\end{claimproof}
	
	\bigskip
	To conclude, the correctness of the algorithm follows from \Cref{claim:tw:leaf,claim:tw:introduceEmpty,claim:tw:introduceOccupied,claim:tw:forgetEmpty,claim:tw:forgetOccupied,claim:tw:join} by induction over the nodes of the tree decomposition.
	Moreover, the size of the dynamic programming table for a single node is $2^\Oh{\tw} \cdot n^\Oh{\tw} \cdot n^\Oh{\tw} \cdot \Oh{n} = n^\Oh{\tw}$ and the most time-consuming operation is to compute the value of a single cell in the join node, where it can take $n^\Oh{\tw}$ time, as we need to try all possible functions~$P$. Therefore, we get that the running time of the algorithm is $n^\Oh{\tw}$, which is indeed in \XP. This finishes the proof.
\end{proof}

It is easy to see that the algorithm from \Cref{thm:tw:XP} becomes fixed-parameter tractable if the degree of every vertex of the topology is bounded, as we needed to guess the number of past and future neighbours, which cannot exceed the maximum degree. Hence, we have the following corollary.

\begin{corollary}
	The \ARHshort problem is fixed-parameter tractable when parameterised by the tree-width $\tw$ and the maximum degree $\Delta(G)$ combined.
\end{corollary}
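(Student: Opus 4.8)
The plan is to reuse the dynamic-programming algorithm of \Cref{thm:tw:XP} essentially unchanged and observe that the sole source of the $n^{\Oh{\tw}}$ blow-up in its analysis is the guessing of the two functions~$P$ and~$F$, which in the general statement range over~$[n]_0$. First I would record the following structural bound: for any inhabitants-respecting partial housing~$\pi_x$ and any inhabitant-occupied vertex $v\in\beta(x)\cap V_\I$, the associated signature entries satisfy $P(v)+F(v)+|A\cap N(v)|\le\ub(\assgn^{-1}(v))\le\deg(\assgn(v))\le\Delta(G)$, and in particular $P(v),F(v)\le\Delta(G)$. Hence it is safe to restrict the dynamic-programming table to signatures $(A,P,F,\rho)$ with $P,F\colon V_\I\cap\beta(x)\to[\Delta(G)]_0$, storing \texttt{false} for any signature that violates this bound. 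By the inequality above, no reachable partial housing is lost by this truncation, so the correctness lemmas \Cref{claim:tw:leaf,claim:tw:introduceEmpty,claim:tw:introduceOccupied,claim:tw:forgetEmpty,claim:tw:forgetOccupied,claim:tw:join} carry over verbatim: all the arithmetic performed on $P$ and $F$ in the recurrences stays inside $[\Delta(G)]_0$ whenever the queried child cells are valid.

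Next I would redo the running-time count with the restricted domain. The table at a single node now has size $2^{\Oh{\tw}}\cdot(\Delta(G)+2)^{\Oh{\tw}}\cdot(\Delta(G)+2)^{\Oh{\tw}}\cdot\Oh{\R} = (\Delta(G)+2)^{\Oh{\tw}}\cdot\Oh{n}$. As in \Cref{thm:tw:XP}, the bottleneck is the join node, where filling one cell iterates over all admissible pairs $(P_y,P_z)$ (the functions $F_y,F_z,\rho_y,\rho_z$ being then determined); but there are now only $(\Delta(G)+2)^{\Oh{\tw}}$ such pairs, so each cell costs $(\Delta(G)+2)^{\Oh{\tw}}$ time. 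Multiplying by the number of cells per node, by the $\Oh{n}$ nodes of a nice tree decomposition, and by the $2^{\Oh{\tw^2}}\cdot n^{\Oh{1}}$ cost of computing an optimal decomposition, yields a total running time of the form $g(\tw,\Delta(G))\cdot n^{\Oh{1}}$, which is exactly \FPT.

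The only remaining work is bookkeeping inside the introduce/forget recurrences: for instance, the ``$F(w)=n$'' guard in the introduce-empty-vertex node should be replaced by ``$F(w)=\Delta(G)$'' (one can never have more than $\Delta(G)$ future neighbours), and analogously one checks that none of the recurrences ever queries a cell outside the restricted domain except when that cell is genuinely \texttt{false}. I do not expect a real obstacle here — the one point to be careful about is that the bound $\ub(\assgn^{-1}(v))\le\deg(\assgn(v))$ is already built into the definition of the model, and it is precisely this that makes the truncation lossless; everything else is a mechanical substitution of $\Delta(G)$ for $n$ in the parameter count of \Cref{thm:tw:XP}.
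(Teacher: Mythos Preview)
Your proposal is correct and follows exactly the paper's own reasoning: the paper simply remarks that the algorithm of \Cref{thm:tw:XP} becomes \FPT once the degree is bounded, because the functions~$P$ and~$F$ then range over~$[\Delta(G)]_0$ rather than~$[n]_0$. One tiny slip: in the join node, $\rho_y$ is not determined by $(P_y,P_z)$ and still contributes an $\Oh{n}$ factor, but this is polynomial and does not affect the \FPT conclusion.
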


On the other hand, we can show that the algorithm for bounded tree-width graphs cannot be improved to a fixed-parameter tractable one. In fact, we show that the problem is \Wh with respect to this parameter.

\begin{theorem}\label{thm:tw:Wh}
	The \ARHshort problem is \Wh when parameterised by the tree-width $\tw(G)$ of the topology.
\end{theorem}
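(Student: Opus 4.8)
The plan is to give a parameterised reduction from a known \Wh problem whose natural parameter translates to the tree-width of the constructed topology. A natural candidate is \textsc{Multicoloured Clique} (or, equivalently, \textsc{Multicoloured Independent Set}), which is \Wh parameterised by the number of colour classes $k$. The difficulty is that the \ARHshort problem lives on bipartite topologies with ``counting'' constraints, so we cannot encode a clique directly; instead I would encode a \emph{selection gadget} per colour class plus \emph{consistency gadgets} per pair of colour classes, and argue that the whole construction has tree-width bounded by a function of $k$ only.

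First I would set up, for each colour class $i\in[k]$, a selection gadget: a small bipartite structure with one ``budget'' inhabitant $s_i$ whose upper-bound forces exactly one refugee to be chosen among a set of vertices representing the vertices of colour $i$ in the \textsc{Multicoloured Clique} instance. Concretely, $s_i$ is adjacent to all ``choice'' vertices of class $i$ and $\ub(s_i)=1$, while a global refugee count $\R=\binom{k}{2}\cdot(\text{something})+k$ (tuned to the gadget) forces a choice to actually be made in every class. Next, for every pair $\{i,j\}$ I would add a verification gadget: inhabitants that are ``happy'' only if the refugee chosen in class $i$ and the refugee chosen in class $j$ correspond to an edge of the input graph. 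The standard trick here is to have, for each non-edge between a class-$i$ vertex $u$ and a class-$j$ vertex $v$, a small private path or a dedicated inhabitant with upper-bound $0$ after reduction that becomes violated precisely when both $u$ and $v$ are selected; one must be careful to route these through \emph{unoccupied} vertices so that \Cref{lem:remove-intolerant} does not collapse the gadget, or alternatively use inhabitants with upper-bound $1$ adjacent to exactly the two forbidden choice-vertices, so selecting both would force two refugees into that neighbourhood. Because \ARHshort neighbourhoods count refugees, the pair-gadget naturally expresses ``at most one of $u,v$'', i.e. a non-edge constraint, which is exactly what \textsc{Multicoloured Independent Set} needs; I would therefore reduce from \textsc{Multicoloured Independent Set} rather than from the clique version.

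The crucial step, and the one I expect to be the main obstacle, is bounding the tree-width of $G'$ by $\Oh{k}$ (or $f(k)$). The selection gadgets are individually of bounded tree-width, but the pair-gadgets create a dense interaction pattern among the $k$ selection gadgets, so a careless construction yields tree-width $\Omega(k^2)$ or worse, or even unbounded. The way to control this is to ensure that each pair-verification inhabitant touches only two choice-vertices and no other shared structure, so that a tree decomposition can be built whose bags contain, at any time, the choice-vertices of at most a constant number of classes together with one verification vertex; chaining these bags along a path/tree over the $\binom{k}{2}$ pairs keeps bag size $\Oh{k}$ if each class contributes only a constant number of ``representative'' vertices to the separator. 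A clean way to achieve this is to contract each selection gadget to a single ``selected'' vertex through a tree-like routing (long induced paths of empty vertices whose inner vertices have degree $2$ and hence do not affect the counting constraints), so that only $k$ vertices ever need to be simultaneously present in a bag. I would then exhibit the decomposition explicitly: a path of join/introduce nodes indexed by the pairs $\{i,j\}$, each bag holding the $k$ selector-representative vertices plus the constantly many vertices of the current pair-gadget, giving width $\Oh{k}$. Finally I would verify the two directions of correctness (a multicoloured independent set yields an inhabitants-respecting housing of the prescribed size, and conversely any such housing, after accounting for the forced choices in each selector, induces a multicoloured independent set), check that the reduction is computable in polynomial time, and invoke that the parameter $k$ maps to $\tw(G')=\Oh{k}$, completing the parameterised reduction.
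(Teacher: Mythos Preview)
Your proposal takes a genuinely different route from the paper, but it contains a real gap precisely at the point you yourself flag as ``the main obstacle''.

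In your direct encoding from \textsc{Multicoloured Independent Set}, each colour class~$i$ contributes one empty ``choice'' vertex per candidate vertex of the source instance, and each non-edge $\{u,v\}$ contributes a verification inhabitant adjacent to the two choice vertices for $u$ and $v$. After deleting the $k$ selector inhabitants $s_1,\dots,s_k$, what remains is exactly a subdivision of the non-edge graph of the source instance (choice vertices on one side, degree-$2$ verifiers on the other). The non-edge graph of a \textsc{Multicoloured Independent Set} instance can have \emph{unbounded} tree-width independent of $k$; in particular, when the source graph is sparse its complement is dense, and you get arbitrarily large grid minors. Hence the tree-width of your topology is not bounded by any function of~$k$. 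Your ``routing through long induced paths of empty vertices'' idea does not help: replacing an edge by a path does not decrease tree-width, and those intermediate empty vertices would in any case be available to absorb refugees, destroying the forcing argument that each class selects exactly one vertex. To make a reduction from \textsc{Multicoloured Independent Set} go through one needs a \emph{numerical} encoding of the selected vertex (so that only $\Oh{k}$ ``counter'' vertices need to appear simultaneously in a bag), not a one-vertex-per-candidate encoding; your sketch does not provide this.

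The paper avoids this difficulty entirely by reducing instead from \textsc{Equitable-$3$-Colouring}, which is already \Wh parameterised by the \emph{tree-width of the input graph}. Their construction takes three subdivided copies of~$H$, wires in a constant number of global guards per copy and per original vertex, and shows $\tw(G') \leq 4\cdot\tw(H)+3$. Because the source parameter is tree-width rather than $k$, no effort is needed to compress the combinatorics of $H$ into a small separator: the reduction just has to preserve tree-width up to a constant factor, which it does by inspection of a modified tree decomposition.
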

\begin{proof}
	We reduce from the \textsc{Equitable-3-Colouring} problem. Here, we are given an undirected graph $H$, and our goal is to decide whether there exists a proper colouring of the vertices of $H$ using $3$ colours such that the sizes of two colour classes differ by at most one. This problem is known to be \Wh with respect to the tree-width~\cite{FellowsFLRSST2011}, see also~\cite{MasarikT2020} for further discussion on this result. Without loss of generality, we can assume that $|V(H)|$ is divisible by~$3$.
	
	As the first step in our construction, we create an intermediate graph~$G'$. Initially, the graph $G'$ is the same graph as~$H$. Next, we subdivide each edge~${e\in E(G')}$ and assign to the newly created vertex $w_e$ an inhabitant $h_e$, called \emph{edge-guard}, with $\ub(h_e) = 1$. Next, we create $3$ disjoint copies of the intermediate graph $G'$ which are called $G_1$, $G_2$, and $G_3$, respectively. Let us denote the vertices of a graph $G_i$, $i\in[3]$, corresponding to vertices of the original graph $H$ as $v_1^i,\ldots,v_{|V(H)|}^i$. For every $i\in[3]$, we create a vertex $g_i$ which is connected by an edge to every $v_j^i$, where $j\in[|V(H)|]$, and which is occupied by an inhabitant $h_{g_i}$, called the \emph{size-guard}, with $\ub(h_{g_i}) = \frac{|V(H)|}{3}$. Then, we construct $G$ as a disjoint union of $G_1$, $G_2$, and $G_3$. As the last step of the construction of the graph $G$, we add single vertex $g_{v_i}$ for every vertex $v_i\in V(H)$, make it occupied by an inhabitant $h_{v_i}$, called the \emph{vertex-guard}, with $\ub(h_{v_i}) = 1$ and add the edge $\{g_{v_i},v_i^j\}$ for every $j\in[3]$. Finally, we set $R=|V(H)|$. See \Cref{fig:tw:Wh} for an illustration of the construction.
	
	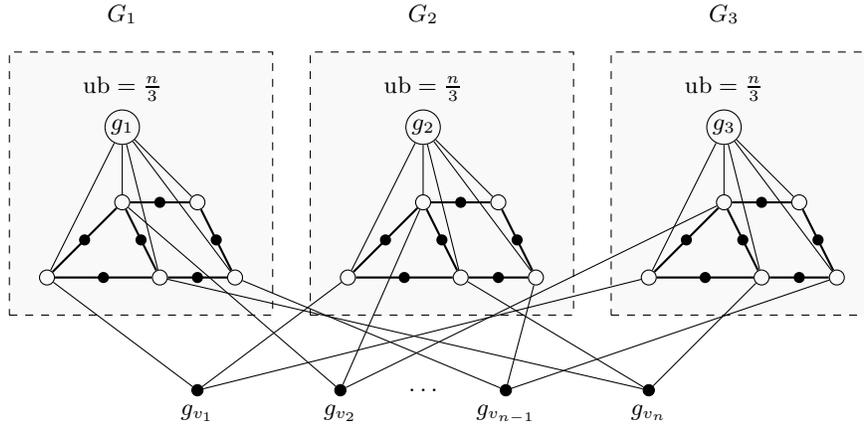
\begin{figure*}[bt!]
		\centering
		\begin{tikzpicture}
			\draw[dashed,fill=gray!5] (-5.5,1) rectangle (-2,-2.5);
			\node at (-4,1.5) {$G_1$};
			
			\node[draw,circle,inner sep=1pt,label={$\ub=\frac{n}{3}$}] (g1) at (-4,0) {$g_1$};
			
			\node[draw,circle,inner sep=2pt] (v11) at (  -5,-2) {};
			\node[draw,circle,inner sep=2pt] (v12) at (  -4,-1) {};
			\node[draw,circle,inner sep=2pt] (v13) at (  -3,-1) {};
			\node[draw,circle,inner sep=2pt] (v14) at (-2.5,-2) {};
			\node[draw,circle,inner sep=2pt] (v15) at (-3.5,-2) {};
			\draw[thick]   (v11) -- (v12) node[midway,draw,circle,fill=black,inner sep=1.2pt] {}
			(v12) -- (v13) node[midway,draw,circle,fill=black,inner sep=1.2pt] {}
			(v13) -- (v14) node[midway,draw,circle,fill=black,inner sep=1.2pt] {}
			(v14) -- (v15) node[midway,draw,circle,fill=black,inner sep=1.2pt] {}
			(v15) -- (v11) node[midway,draw,circle,fill=black,inner sep=1.2pt] {}
			(v12) -- (v15) node[midway,draw,circle,fill=black,inner sep=1.2pt] {};
			
			\draw (g1) edge (v11) edge (v12) edge (v13) edge (v14) edge (v15);

			\draw[dashed,fill=gray!5] (-1.5,1) rectangle (2,-2.5);
			\node at (0,1.5) {$G_2$};
			
			\node[draw,circle,inner sep=1pt,label={$\ub=\frac{n}{3}$}] (g2) at (0,0) {$g_2$};
			
			\node[draw,circle,inner sep=2pt] (v21) at ( -1,-2) {};
			\node[draw,circle,inner sep=2pt] (v22) at (  0,-1) {};
			\node[draw,circle,inner sep=2pt] (v23) at (  1,-1) {};
			\node[draw,circle,inner sep=2pt] (v24) at (1.5,-2) {};
			\node[draw,circle,inner sep=2pt] (v25) at (0.5,-2) {};
			\draw[thick]   (v21) -- (v22) node[midway,draw,circle,fill=black,inner sep=1.2pt] {}
			(v22) -- (v23) node[midway,draw,circle,fill=black,inner sep=1.2pt] {}
			(v23) -- (v24) node[midway,draw,circle,fill=black,inner sep=1.2pt] {}
			(v24) -- (v25) node[midway,draw,circle,fill=black,inner sep=1.2pt] {}
			(v25) -- (v21) node[midway,draw,circle,fill=black,inner sep=1.2pt] {}
			(v22) -- (v25) node[midway,draw,circle,fill=black,inner sep=1.2pt] {};
			
			\draw (g2) edge (v21) edge (v22) edge (v23) edge (v24) edge (v25);
			
			\draw[dashed,fill=gray!5] (2.5,1) rectangle (6,-2.5);
			\node at (4,1.5) {$G_3$};
			\node[draw,circle,inner sep=1pt,label={$\ub=\frac{n}{3}$}] (g3) at (4,0) {$g_3$};
			
			\node[draw,circle,inner sep=2pt] (v31) at (  3,-2) {};
			\node[draw,circle,inner sep=2pt] (v32) at (  4,-1) {};
			\node[draw,circle,inner sep=2pt] (v33) at (  5,-1) {};
			\node[draw,circle,inner sep=2pt] (v34) at (5.5,-2) {};
			\node[draw,circle,inner sep=2pt] (v35) at (4.5,-2) {};
			\draw[thick]   (v31) -- (v32) node[midway,draw,circle,fill=black,inner sep=1.2pt] {}
			(v32) -- (v33) node[midway,draw,circle,fill=black,inner sep=1.2pt] {}
			(v33) -- (v34) node[midway,draw,circle,fill=black,inner sep=1.2pt] {}
			(v34) -- (v35) node[midway,draw,circle,fill=black,inner sep=1.2pt] {}
			(v35) -- (v31) node[midway,draw,circle,fill=black,inner sep=1.2pt] {}
			(v32) -- (v35) node[midway,draw,circle,fill=black,inner sep=1.2pt] {};
			
			\draw (g3) edge (v31) edge (v32) edge (v33) edge (v34) edge (v35);
			
			\foreach[count=\i] \x in {-3,-1.1}{
				\node[draw,circle,fill=black,inner sep=1.5pt,label=270:{$g_{v_\i}$}] (gv\i) at (\x,-3.5) {};
				\draw (gv\i) edge (v1\i) edge (v2\i) edge (v3\i);
			}
			\node at (0,-3.5) {$\hdots$};
			\node[draw,circle,fill=black,inner sep=1.5pt,label=270:{$g_{v_{n-1}}$}] (gv4) at (1.1,-3.5) {};
			\draw (gv4) edge (v14) edge (v24) edge (v34);
			\node[draw,circle,fill=black,inner sep=1.5pt,label=270:{$g_{v_n}$}] (gv5) at (3,-3.5) {};
			\draw (gv5) edge (v15) edge (v25) edge (v35);
		\end{tikzpicture}
		\caption{An illustration of the construction used to prove \Cref{thm:tw:Wh}. All filled vertices are occupied by inhabitants with upper-bound equal to $1$. By $n$ we denote the number of vertices of the original graph $H$.}
		\label{fig:tw:Wh}
	\end{figure*}
	
	Let the input instance of the \textsc{Equitable-3-Colouring} problem be a \YesI and $c\colon V(H)\to[3]$ be an equitable and proper colouring of $H$. Let $V_i = \{v_{i_1},\ldots,v_{i_{\frac{n_H}{3}}}\}$ be the set of vertices with colour $i\in[3]$ according to colouring~$c$. For every $i\in[3]$ and every $j\in[n_H]$, we add to $\pi$ the vertices $v^i_{i_j}$. Now, $\pi$ clearly respects all edge-guards as otherwise $c$ would not be a proper colouring. Also, $\pi$ respects all size-guards, as the size of every colour class is exactly $\frac{n}{3}$, and it is easy to see that all vertex-guards are also satisfied since each vertex is a member of only one colour class.

	In the opposite direction, let $\pi$ be an inhabitants-respecting housing. Recall that $|\pi| = |V(H)|$. The size-guards secure that for every graph $G_i$, $i\in[3]$, we have $|V(G_i)\cap \pi| \leq \frac{|V(H)|}{3}$ and since $|\pi|=|V(H)|$, we get equality.
	
	\begin{claim}\label{lem:good_colouring}
		Let $v_i\in V(H)$ be a vertex of $H$. For every inhabitants-respecting housing~$\pi$ it holds that $|\pi\cap\{v_i^1,v_i^2,v_i^3\}|= 1$.
	\end{claim}
	\begin{claimproof}
		First, suppose that $|\pi\cap\{v_i^1,v_i^2,v_i^3\}| \geq 2$. Then $\pi$ is not inhabitants-respecting since the vertex-guard $h_{v_i}$ approves at most $1$ refugee in the neighbourhood and $h_{v_i}$ is connected with an edge to all $v_i^1$, $v_i^2$, and $v_i^3$. Therefore, $|\pi\cap\{v_i^1,v_i^2,v_i^3\}| \leq 1$. If $|\pi\cap\{v_i^1,v_i^2,v_i^3\} = 0$, then, by a simple counting argument, there is a vertex $v_j\in V(H)$, $v_j \not= v_i$, such that $|\pi\cap\{v_j^1,v_j^2,v_j^3\}| \geq 2$, which is not possible by previous argumentation.
	\end{claimproof}
	
	Now, we define a solution colouring for $G$. For every $v_i\in V(H)$, we set
	\[
	c(v_i) = \begin{cases}
		1 & \text{if $v_i^1\in\pi$,}\\
		2 & \text{if $v_i^2\in\pi$,}\\
		3 & \text{if $v_i^3\in\pi$.}
	\end{cases}
	\]
	Note that thanks to \Cref{lem:good_colouring} the colouring $c$ is well-defined and we already argued that $c$ is equitable. What remains to show is that $c$ is a proper colouring. For the sake of contradiction, assume that there are $v_i$ and $v_j$ of the same colour and, at the same time, $e = \{v_i,v_j\}\in E(H)$. Without loss of generality, let $c(v_i) = c(v_j) = 1$. In the graph $G_1$, there is the vertex $w_{e}$ occupied by an inhabitant $h_{e}$ with $\ub(h_e) = 1$ which is neighbour of both $v_i^1$ and $v_j^1$. But $v_i$ and $v_j$ can be coloured with the same colour if and only if both $v_i^1$ and $v_j^1$ are in $\pi$, which is not possible as $\pi$ would not be inhabitants-respecting. Thus, the colouring $c$ is a proper colouring of $H$.
	
	Finally, we show that tree-width of our construction is indeed bounded. Let $\mathcal{T}=(T,\beta,r)$ be a tree decomposition of $H$. We construct the tree decomposition $\mathcal{T}'$ of the graph $G$ as follows. The tree $T$ and the root $r$ remain the same. Now, let $x\in V(T)$ be a node of $T$. We define a new mapping $\beta'\colon V(T)\to 2^{V(G)}$ such that we replace all $v_i\in\beta(x)$ with vertices $v_i^1$, $v_i^2$, $v_i^3$, and $g_{v_i}$. As the final step, we add to each bag the vertices $g_1$, $g_2$, and $g_3$. By construction, the width of the tree decomposition $\mathcal{T}'$ is at most $4\cdot\tw(H) + 3$, which finishes the proof.
\end{proof}

For problems that are not tractable when restricted to bounded tree-width graphs, it is natural to seek tractability with respect to more restricted parameters. A natural candidate for such a parameter is the vertex-cover number~\cite{KorhonenP2015,GanianHKRSS2023,BlazejGKPSS2023}; however, fixed-parameter tractability of \ARHshort with respect to the vertex-cover number follows, again, from the more general setting~\cite{KnopS2023}.

Therefore, we turn our attention to topologies of a bounded feedback-edge number. Recall that the feedback-edge number is the minimum number of edges that need to be removed from a graph to obtain an acyclic graph. Such networks appear naturally in practice, for example, in backbone infrastructure, waterway monitoring, or transportation maps~\cite{BentertBNN2017}. Also, the feedback-edge set number has been used successfully to obtain fixed-parameter algorithms in many important problems for AI research~\cite{GanianK2021,DementievFI2022,GruttemeierK2022,BredereckHKN2022}.

\begin{theorem}
	The \ARHshort problem is fixed-parameter tractable when parameterised by the feedback-edge set number $\fes(G)$ of the topology.
\end{theorem}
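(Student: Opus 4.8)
The approach is to reduce to the forest case of \Cref{thm:forest:poly} by branching over a small vertex set. Recall from \Cref{sec:model} that we may assume the topology is bipartite with the inhabitant-occupied vertices on one side and the empty vertices on the other; deleting the edges inside a part never changes which housings are inhabitants-respecting and can only decrease the feedback-edge set number. A minimum feedback-edge set $F$ can be computed in polynomial time, since $\fes(G)=|E(G)|-|V(G)|+c(G)$ (with $c(G)$ the number of connected components) and $F=E(G)\setminus E(T)$ works for any spanning forest $T$. Write $k=\fes(G)=|F|$. Because $G$ is bipartite, every edge of $F$ has exactly one empty endpoint, so the set $Y\subseteq V_U$ of empty vertices incident to an edge of $F$ satisfies $|Y|\le k$. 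The crucial observation is that $Y$ decouples the forest $G-F$ from the rest: for an inhabitant $h$, the refugees counted towards $\ub(h)$ that are reached through an edge of $E(G)\setminus F$ all lie in the forest $G-F$, while those reached through an edge of $F$ all lie in $Y$, and no vertex is reached in both ways since $G$ is simple.

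Given this, I branch over all $2^{|Y|}\le 2^{k}$ subsets $S\subseteq Y$, with the intended meaning ``$S$ is exactly the set of vertices of $Y$ occupied by a refugee''. For a fixed $S$ I build a forest instance $\mathcal{I}_S$ on $(G-F)-Y$ as follows: the vertices of $Y\setminus S$ are simply deleted; each vertex of $S$ is deleted but still counts, so I decrease the number of refugees to $\R-|S|$ and, for every inhabitant $h$, set $\ub'(h)=\ub(h)-|N_G(\assgn(h))\cap S|$ (this correctly simulates ``these empty vertices are occupied'', since the only effect of a housed vertex is on itself and its neighbours). If $\R-|S|<0$ or $\ub'(h)<0$ for some $h$, the branch fails. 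Otherwise I apply \Cref{lem:remove-intolerant} exhaustively to $\mathcal{I}_S$ and invoke the algorithm of \Cref{thm:forest:poly} on the result; the branch succeeds iff the maximum number of refugees that can be housed is at least $\R-|S|$. If some branch succeeds I answer \Yes (reconstructing a housing as the forest housing together with $S$), otherwise \No. Since $(G-F)-Y$ is acyclic, \Cref{thm:forest:poly} applies, and the total running time is $2^{\Oh{k}}\cdot n^{\Oh{1}}$, which is \FPT.

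For correctness I would argue both directions. If $\pi$ is an inhabitants-respecting housing of size $\R$, then the branch $S=\pi\cap Y$ succeeds: $\pi\setminus Y$ is a size-$(\R-|S|)$ housing in $(G-F)-Y$, and a short computation using $S\subseteq\pi$ and the decoupling observation gives $|N_{G-F}(\assgn(h))\cap(\pi\setminus Y)|\le|N_G(\assgn(h))\cap\pi|-|N_G(\assgn(h))\cap S|\le\ub(h)-|N_G(\assgn(h))\cap S|=\ub'(h)$ for every inhabitant $h$, so $\pi\setminus Y$ respects $\mathcal{I}_S$. Conversely, if branch $S$ succeeds, take a housing $\pi'$ of size exactly $\R-|S|$ in $\mathcal{I}_S$ — such a housing exists whenever the maximum housable number is at least that much, because deleting refugees from an inhabitants-respecting housing leaves it inhabitants-respecting — and set $\pi=\pi'\cup S$. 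Then $|\pi|=\R$, $\pi\subseteq V_U$, and since $\pi'$ uses no vertex of $Y$ and hence no $F$-neighbour of any inhabitant, $|N_G(\assgn(h))\cap\pi|=|N_{G-F}(\assgn(h))\cap\pi'|+|N_G(\assgn(h))\cap S|\le\ub'(h)+|N_G(\assgn(h))\cap S|=\ub(h)$ for every $h$; inhabitants removed by \Cref{lem:remove-intolerant} impose no constraint, and interleaving that reduction is sound because deleting an intolerant inhabitant together with its (never usable) empty neighbours changes neither the set of valid housings nor their maximum size.

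The genuinely structural content of the proof is the observation that a feedback-edge set of size $k$ has at most $2k$ endpoints, of which at most $k$ are empty, and that these endpoints fully control every constraint that crosses between the forest and the removed edges; this is what makes the brute force over $Y$ affordable. I expect the only mildly delicate part of the write-up to be the bookkeeping around the forced vertices $S$: verifying that the decrements of $\ub$ and of the refugee target faithfully simulate occupancy, that the instance handed to \Cref{thm:forest:poly} is indeed a forest after deleting both edges of $F$ and the vertices of $Y$, and that \Cref{lem:remove-intolerant} may be re-applied after these modifications. None of these points is hard, but each needs to be stated carefully.
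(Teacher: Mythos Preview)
Your proposal is correct and follows essentially the same approach as the paper: branch over the at most $k$ empty endpoints $Y$ of a feedback-edge set, simulate the guessed occupied subset $S\subseteq Y$ by deleting $Y$, decrementing the upper-bounds of neighbours of $S$, and reducing the refugee count, then solve the resulting forest instance via \Cref{thm:forest:poly}. Your write-up is in fact more careful than the paper's on two points: you give the correct bound $|Y|\le k$ (the paper states $k/2$, which is unjustified but immaterial for \FPT), and you spell out both directions of the correctness argument, including the handling of newly intolerant inhabitants via \Cref{lem:remove-intolerant}.
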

\begin{proof}
	Let $k=\fes(G)$. Since the topology $G$ is a bipartite graph with one part consisting of refugees and the second part consisting of empty houses, every edge~$f$ in a feedback-edge set $F$ connects an inhabitant and an empty vertex. Therefore, $F$ contains $\frac{k}{2}$ empty vertices. Let $F_U$ be the set of empty vertices in $F$. Our algorithm guesses a set $S\subseteq F_U$ of empty vertices for a solution. If $|S| > R$, we refuse the guess and continue with another possibility. Otherwise, we create the reduced instance $\mathcal{I}'=((V(G)\setminus F_U, E(G)\setminus F),\I,\assgn,\R-|S|,\ub')$, where for every $j\in I$ we set $\ub'(h) = \ub(h) - |S\cap N_G(\assgn(h))|$. If this produces an inhabitant with negative upper-bound, we again refuse the guessed $S$. Otherwise, we apply \Cref{lem:remove-intolerant}. Finally, since the topology of $\mathcal{I}'$ is by definition a forest, we try to solve $\mathcal{I}'$ using \Cref{thm:forest:poly}. If there exists a solution for $\mathcal{I}'$, then there is also a solution for the original $\mathcal{I}$.
	
	There are $2^{\frac{k}{2}}$ possible subsets $S$ and every guess can be solved in polynomial time using \Cref{thm:forest:poly}. This gives us the total running time $2^\Oh{k}\cdot n^\Oh{1}$, which is indeed in \FPT, and the theorem follows.
\end{proof}

\subsubsection{Dense Topologies.}

\noindent{}So far, we have studied mostly topologies that are, from a graph-theoretical perspective, sparse. In what follows, we turn our attention to dense graphs. Recall that all topologies are naturally bipartite and, therefore, the palette of structural parameters that are bounded for dense graphs is a bit restricted.

We start this line of research with a simple algorithm for complete bipartite graphs. In fact, for complete bipartite graphs, we only need to compare the minimum upper-bound with the number of refugees.

\begin{proposition}
	Every instance of the \ARHshort problem where the topology is a complete bipartite graph can be solved in linear time.
\end{proposition}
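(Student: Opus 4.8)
The plan is to exploit the extreme symmetry of a complete bipartite topology: every inhabitant sees \emph{exactly the same} set of empty vertices, so the only relevant feature of a candidate housing is how many refugees fall on each side of the bipartition. By the normalisation noted right after \Cref{ex:model} we may assume the topology is bipartite with one part equal to $V_\I$ and the other equal to $V_U$; being moreover complete bipartite, it is then precisely $K_{|\I|,|V_U|}$, i.e.\ $N_G(\assgn(h)) = V_U$ for every inhabitant $h\in\I$.

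With this in hand the argument collapses. Fix any housing $\pi\subseteq V_U$ with $|\pi|=R$; such a set exists because $|V_U| = |V(G)| - |\I| \ge R$. Then for \emph{every} inhabitant $h$ we have $|N_G(\assgn(h))\cap\pi| = |\pi| = R$, so $\pi$ is inhabitants-respecting if and only if $R\le\ub(h)$ for all $h\in\I$, that is, if and only if $\min_{h\in\I}\ub(h)\ge R$. The ``if'' direction is exactly the earlier proposition stating that $\min_{h\in\I}\ub(h)\ge\R$ forces a \YesI; for the ``only if'' direction observe that when $\min_{h\in\I}\ub(h) < R$, the minimising inhabitant has $R>\ub(h)$ refugees in its neighbourhood under every size-$R$ housing, so no solution exists. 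Hence the instance is a \YesI precisely when $\min_{h\in\I}\ub(h)\ge R$, and computing this minimum, comparing it with $R$, and (if it holds) outputting an arbitrary $R$-subset of $V_U$ all take linear time. Note this also transparently handles intolerant inhabitants and the case $\I=\emptyset$ (with the convention $\min\emptyset=\infty$).

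There is essentially no hard step here; the write-up is a short wrap-up. The only points deserving a line of care are the degenerate cases just mentioned and, if one prefers to avoid invoking the bipartite normalisation, recovering the (unique, for a connected complete bipartite graph) bipartition $(X,Y)$ of the topology in linear time and then running the slightly more general check: with $e_X,e_Y$ the numbers of empty vertices on the two sides and $u_X,u_Y$ the minimum upper-bounds among the inhabitants on each side, a valid housing of size $R$ exists iff $\min(e_X,u_Y)+\min(e_Y,u_X)\ge R$. Either way the bottleneck is merely scanning the input once.
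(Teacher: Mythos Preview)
Your proof is correct and matches the paper's approach exactly: under the standing normalisation that the bipartition is $(V_\I,V_U)$, compute $\min_{h\in\I}\ub(h)$ and compare it with $R$. The two-sided formula $\min(e_X,u_Y)+\min(e_Y,u_X)\ge R$ that you add for the non-normalised case is a correct extra that the paper itself does not include.
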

\begin{proof}
	Our algorithm is based on a simple comparison. By enumerating all inhabitants, we can compute $\ub_{\min}$, which is the minimum upper-bound over all inhabitants. Now, if $\R \leq \ub_{\min}$, then we return \Yes. Otherwise, the result is \No as there is at least one inhabitant who does not approve $\R$ refugees in his or her neighbourhood.
\end{proof}

As the situation with complete bipartite graphs was rather trivial, we turn our attention to graphs that are not far from being complete bipartite. In particular, we will study graphs that can be obtained from complete bipartite graphs by deleting~$p$ edges. Formally, this family is defined as follows.

\begin{definition}
	A graph $G$ is \emph{$p$-nearly complete bipartite} if it can be obtained from a complete bipartite graph by the removal of $p$ edges.
\end{definition}

\begin{theorem}
	The \ARHshort problem where the topology is $p$-nearly complete bipartite graph is fixed-parameter tractable when parameterised by~$p$.
\end{theorem}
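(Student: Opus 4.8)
The plan is to reduce to the case of few inhabitants, so that \Cref{lem:ARH:FPT:I} can be invoked. Observe first that the topology $G$ is a bipartite graph $(V_\I \cup V_U, E)$, and being $p$-nearly complete bipartite means there is a set $D$ of at most $p$ ``non-edges'' so that every pair $(h,u)$ with $h$ an inhabitant and $u$ an empty vertex is an edge of $G$ unless $\{h,u\}\in D$. Let $\I_D \subseteq \I$ be the set of inhabitants incident to at least one non-edge of $D$; then $|\I_D| \le p$. Every inhabitant not in $\I_D$ is adjacent to \emph{all} of $V_U$, hence sees exactly $\R$ refugees in any housing, so such an inhabitant is satisfied if and only if $\ub(h) \ge \R$; this can be checked in polynomial time, and if any such inhabitant fails we reject immediately. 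After this test, the only inhabitants whose satisfaction is not automatic are those in $\I_D$.

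Next I would reduce the empty side. Let $U_D \subseteq V_U$ be the set of empty vertices incident to some non-edge, so $|U_D| \le p$. Every empty vertex in $V_U \setminus U_D$ is adjacent to all of $\I_D$ (indeed to all inhabitants), so from the point of view of the surviving constraints — which only concern inhabitants in $\I_D$ — the vertices of $V_U \setminus U_D$ are indistinguishable. The plan is therefore to guess, in $2^{|U_D|} \le 2^p$ ways, which subset $S \subseteq U_D$ of the ``special'' empty vertices is used by the housing, and then guess an integer $r \in [\R]_0$ with $r = \R - |S|$ counting how many of the remaining refugees are placed on anonymous vertices of $V_U \setminus U_D$ (this is forced once $S$ is fixed, and we only keep guesses with $0 \le r \le |V_U \setminus U_D|$). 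For a fixed guess $(S,r)$ we form a small residual instance: keep only the inhabitants in $\I_D$, keep the vertices $U_D$ together with a single ``bundle'' of $|V_U \setminus U_D|$ twin empty vertices of which exactly $r$ must be chosen, with each inhabitant $h \in \I_D$ adjacent to a twin vertex iff $h$ is adjacent to all of $V_U\setminus U_D$ — which is always the case — so each $h\in\I_D$ already sees all $r$ of the bundled refugees, and its residual upper-bound becomes $\ub(h) - r - |S \cap N_G(\assgn(h))|$; if this is negative we discard the guess.

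What remains is a instance whose number of inhabitants is $|\I_D| \le p$, so \Cref{lem:ARH:FPT:I} solves it in \FPT time in $p$; equivalently, since at most $|U_D| \le p$ empty vertices carry any constraint, one can brute-force over the $2^{p}$ subsets of $U_D$ directly. Iterating over the $2^p$ choices of $S$ and the at most $\R+1 \le n$ choices of $r$, and applying the \FPT subroutine for each, gives overall running time $2^{\Oh{p}} \cdot f(p) \cdot n^{\Oh{1}}$, which is \FPT in $p$. The main thing to get right is the bookkeeping that identifies exactly which inhabitants and which empty vertices can be ``summarized'' — i.e.\ verifying that collapsing $V_U \setminus U_D$ into a single count $r$ does not lose solutions, which follows because those vertices are true twins with respect to the only constraints that survive the preprocessing step. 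I do not expect a genuine obstacle here; the only mild subtlety is ensuring the residual upper-bounds and the intolerant-inhabitant cleanup (\Cref{lem:remove-intolerant}) are applied consistently after the guesses.
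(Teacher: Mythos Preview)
Your proposal is correct and follows essentially the same approach as the paper: bound the set $\I_D$ of inhabitants incident to a missing edge by $p$, check in polynomial time that every other inhabitant (who is adjacent to all of $V_U$) has $\ub(h)\ge \R$, and then invoke \Cref{lem:ARH:FPT:I} on the at most $p$ remaining inhabitants. The paper stops at exactly that point; your further decomposition of $V_U$ into $U_D$ and a bundle of twin vertices, together with the $2^p$ guessing over $U_D$, is correct but redundant, since once only $|\I_D|\le p$ inhabitants remain \Cref{lem:ARH:FPT:I} already resolves the instance without any additional guessing.
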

\begin{proof}
	We denote by $\I_C$ the set of inhabitants that are neighbours of all empty houses, that is, $\I_C = \{h\in \I \mid N(\assgn(i)) = V_U\}$. If $|\I_C|$ is at least one and $\min_{h\in\I_C} \ub(h) \leq \R$, then we return \No.
	
	Now, let $p < |\I|$. Then there is at least one inhabitant in $\I_C$. As we have $\min_{h\in\I_C} \ub(h) > \R$, inhabitants in $\I_C$ approve any housing. Therefore, we can safely remove them from $G$ without changing the solution. What remains is an instance with at most~$p$ inhabitants which can be resolved in \FPT time using \Cref{lem:ARH:FPT:I}.
	
	If it holds that $p \geq |\I|$ we have that the number of inhabitants is small ($\Oh{p}$, in fact). Therefore, we can again use \Cref{lem:ARH:FPT:I} to decide the instance directly in \FPT time.
\end{proof}

Using similar but more involved ideas, we can also show that the problem of our interest is tractable if the topology is not far from being a complete bipartite graph in terms of the number of removed vertices.

\begin{theorem}
	The \ARHshort problem is fixed-parameter tractable when parameterised by the distance to complete bipartite graph.%
\end{theorem}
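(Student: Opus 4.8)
The plan is to exploit the fact that a small vertex modulator $X$ with $G\setminus X$ complete bipartite leaves only few ``genuinely different'' inhabitants, and to combine a brute-force over the part of the solution inside $X$ with the fixed-parameter algorithm for few inhabitants (\Cref{lem:ARH:FPT:I}).

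First I would let $X$ be a modulator of size $k$ such that $G\setminus X$ is a complete bipartite graph (such a set can be assumed to be part of the input, or computed in \FPT time by branching on a forbidden induced subgraph, since the complete-bipartite-or-edgeless graphs are exactly the graphs with no induced $K_3$, $P_4$, or $2K_2$). Recall that we may assume $G$ is bipartite with sides $V_\I$ and $V_U$. If $G\setminus X$ has an edge, then it is a connected complete bipartite graph whose sides are exactly $V_\I\setminus X$ and $V_U\setminus X$; otherwise $G\setminus X$ is edgeless. Put $B:=V_U\setminus X$, let $I_X:=V_\I\cap X$ be the at most $k$ inhabitants inside the modulator (with possibly arbitrary neighbourhoods), and let $I_A:=V_\I\setminus X$. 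The key structural observation is that every inhabitant of $I_A$ is adjacent either to all of $B$ (first case) or to none of $B$ (edgeless case, all of its empty-house neighbours lying in $X$), so once the refugees inside $X$ are fixed, its constraint carries no information about which vertices of $B$ are chosen.

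Next I would branch over the set $S_X:=\pi\cap(V_U\cap X)$ of refugees placed inside the modulator; there are at most $2^k$ choices. For a fixed $S_X$, define residual upper-bounds $\ub'(h)=\ub(h)-|N_G(\assgn(h))\cap S_X|$ and discard the branch if some $\ub'(h)<0$, or if $R':=\R-|S_X|$ is negative or exceeds $|B|$. By the observation above, the inhabitants of $I_A$ are then either already satisfied (edgeless case) or impose only the single threshold $R'\le\min_{h\in I_A}\ub'(h)$, which is checked once. What remains is to decide whether $R'$ refugees can be placed on vertices of $B$ so that every inhabitant of $I_X$ respects $\ub'$; this is exactly an instance of \ARHshort with topology $G[\,\assgn(I_X)\cup B\,]$, $R'$ refugees, and at most $k$ inhabitants, which \Cref{lem:ARH:FPT:I} solves in \FPT time. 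We answer \Yes iff some branch succeeds.

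For correctness, any inhabitants-respecting housing $\pi$ of size $\R$ gives the branch $S_X=\pi\cap X$, for which $\pi\cap B$ witnesses a \YesI of the restricted instance; conversely, combining a branch's $S_X$ with a solution of its restricted instance yields a housing of size $\R$ that respects $I_X$ by construction, respects $I_A$ by the threshold check, and involves no further inhabitants since $V_\I=I_X\cup I_A$. The running time is $2^k$ times a polynomial plus the \FPT running time of \Cref{lem:ARH:FPT:I} on $k$ inhabitants, hence $f(k)\cdot n^{\Oh{1}}$. The main obstacle is exactly that the inhabitants sitting inside the modulator may have arbitrary neighbourhoods, which is what rules out a one-line reduction to the complete bipartite case; this is handled by noting there are at most $k$ of them and invoking the few-inhabitants algorithm after peeling off the structurally trivial inhabitants of $I_A$. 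A secondary technical point requiring care is aligning the bipartition of $G\setminus X$ with $(V_\I,V_U)$, resolved by the case distinction on whether $G\setminus X$ contains an edge.
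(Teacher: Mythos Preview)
Your proof is correct and follows essentially the same strategy as the paper: guess the refugees placed inside the modulator, observe that the inhabitants outside the modulator all share the same remaining empty-house neighbourhood and hence collapse to a single threshold check (the paper keeps them as one synthetic inhabitant with the minimum upper-bound, which is equivalent), and then invoke \Cref{lem:ARH:FPT:I} on the at most $k$ modulator inhabitants. The only slip is the parenthetical forbidden-subgraph characterisation---$K_2+K_1$ is already a minimal obstruction to being complete bipartite (making $P_4$ and $2K_2$ redundant)---but since you also allow the modulator to be part of the input this does not affect the argument.
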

\begin{proof}
	Let $M$ be a modulator to a complete bipartite graph of size~$k$ and let $G\setminus M$ be a complete bipartite graph. We first guess a set of vertices $S\subseteq M\cap V_U$ that are housing refugees in a solution. Next, we decrease the upper-bound of each inhabitant assigned to the neighbourhood of $S$ and remove $M\cap V_U$ from $G$. Then we replace all vertices in $V_I \setminus M$ with a single vertex $g$ occupied by an inhabitant $h$ with an upper-bound equal to the minimum over all inhabitants occupying vertices outside the modulator; formally, $\ub(h) = \min_{v\in V_I\setminus M} \ub(\assgn^-1(v))$. It is easy to see that in the reduced instance we have $\Oh{k}$ inhabitants. Therefore, we can use \Cref{lem:ARH:FPT:I} to decide whether it is possible to house $R-|S|$ refugees in houses of the reduced instance.
	
	There are $2^\Oh{k}$ possible sets $S$. The reduction of the instance is polynomial-time and deciding the reduced instance is in \FPT, which finishes the proof.
\end{proof}

\section{Relaxing the Upper-Bounds}\label{sec:relaxed}
\newcommand{\exc}{\ensuremath{\operatorname{exc}}}
In previous section, we discussed the model where we were interested in housings that perfectly meet upper-bounds of all inhabitants. Despite the fact that such outcomes are perfectly stable, this requirement, as illustrated in \Cref{lem:housing-non-existence,lem:housing-non-existence-intolerant} can lead to negative outcomes even for fairly simple settings.

In this section, we relax the notation of inhabitants-respecting housing by allowing small violation of inhabitants' upper-bounds. Before we define the corresponding computational problem formally, we introduce some auxiliary notation. Let~$h\in\I$ be an inhabitant,~$\pi$ be a set of vertices housing refugees, and~$v\in V$ be a vertex such that~$\assgn(h) = v$. The function~$\exc_\pi\colon\I\to\N$ is a function that assigns to each inhabitant the number of refugees housed in his neighbourhood exceeding~$\ub(h)$, that is,~$\exc_\pi(h) = \max\{0,|N_G(\assgn(h))\cap\pi|-\ub(h)\}$. If the housing $\pi$ is clear from the context, we drop the subscript. Slightly abusing the notation, we set $\exc(\pi) = \sum_{h\in\I} \exc_\pi(h)$, where $\pi$ is a housing. We say that housing $\pi\subseteq V_U$ is \emph{$t$-relaxed inhabitants-respecting} if $\exc(\pi) \leq t$.

\vspace{0.15cm}\noindent
\begin{tabularx}{\linewidth}{lX}
	\toprule
	\multicolumn{2}{c}{\small\textsc{Relaxed} \ARHshort (\RARHshort)} \\\midrule
	\small\emph{Input:} & \small{}A topology $G=(V,E)$, a set of inhabitants $\I$, a number of refugees~$\R$, an inhabitants assignment $\assgn\colon\I\to V$, an upper-bound $\ub\colon\I\to\N$, and a number $t$. \\
	\small\emph{Question:} & \small{}Is there a $t$-relaxed inhabitants-respecting housing~$\pi$ of size $\R$? \\
	\bottomrule
\end{tabularx}
\vspace{0.15cm}

To illustrate the definition of the problem, what follows is a simple example of an \RARHshort instance with $t=2$.

\begin{example}\label{ex:model2}\itshape
	Let $\mathcal{I}$ be as shown in \Cref{fig:example2}, that is, we have three inhabitants $h_1$, $h_2$, and $h_3$ with the same upper-bound $\ub(h_1) = \ub(h_2) = \ub(h_3) = 1$, and we have three refugees to house. The empty vertices are $x$, $y$, $z$ and $w$. Additionally, we set $t=2$.
	\begin{figure}[tb!]
		\centering
		\begin{tikzpicture}[scale=0.75]
			\node[draw,circle,label={270:\small$\ub(h_1) = 1$}] (h1) at (0,0) {$h_1$};
			\node[draw,circle,label={90:\small$\ub(h_2) = 1$}] (h2) at (4,1.75) {$h_2$};
			\node[draw,circle,label={270:\small$\ub(h_3) = 1$}] (h3) at (8,0) {$h_3$};
			\node[draw,circle,label={270:$y$},fill=black] (u1) at (4,0) {};
			\node[draw,circle,label={90:$w$},fill=black] (u4) at (8,1.75) {};
			
			\draw (h1) -- (u1) -- (h3);
			\draw (h2) -- (u1);
			\draw (h2) -- (u4);
			\draw (h1) -- (h2) node[midway,draw,circle,fill=black,label={90:$x$}] {};
			\draw (h2) -- (h3) node[midway,draw,circle,fill=black,label={90:$z$}] {};
		\end{tikzpicture}
		\caption{An example of the \RARH problem with $t=2$. The notation is the same as in \Cref{fig:example}.}
		\label{fig:example2}
	\end{figure}
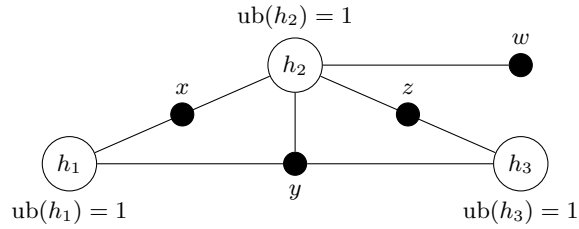
	Clearly, there is no inhabitants-respecting housing as the upper-bound of inhabitant $h_2$ is equal to $1$, and he is the neighbour of all empty vertices. Let $\pi_1 = \{x,y,z\}$ be a housing. Then $\exc(h_1) = \exc(h_3) = 1$ as both of them are neighbouring two vertices of $\pi_i$ and their upper-bound is exactly one. For $h_3$, we have $\exc(h_3) = 2$ Thus $\exc(\pi_1) = 4$.
	On the other hand, if we choose as a solution $\pi_2$ the vertices $x$, $z$, and~$w$, we obtain that for inhabitants $h_1$ and $h_3$ the number of refugees in their neighbourhood matches their upper-bounds. Clearly, this is not the case for $h_2$ as he is a neighbour of all refugees and his upper-bound is also $1$, that is, $\exc(h_2) = 2$. Overall, we have $\exc(\pi_2) = \exc_{\pi_2}(h_1) + \exc_{\pi_2}(h_2) + \exc_{\pi_2}(h_3) = 0 + 2 + 0 = 2$.
\end{example}

Now, we turn our attention to the existence guarantees we have in this relaxed setting. Observe that for~$t=0$, the~$t$-\RARHshort problem is, in fact, equivalent to the standard \ARHshort problem studied in the previous sections. Therefore, the following result is not very surprising.

\begin{theorem}
	\label{thm:tARH:NPc}
	It is \NPc to decide the \RARH problem for any~$t\geq 0$.
\end{theorem}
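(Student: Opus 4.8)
The plan is to observe that, unlike many ``relaxed'' or ``approximate'' variants where the slack parameter genuinely weakens the problem, here the slack $t$ is a fixed part of the input rather than a parameter. This means that \NP-completeness is easy to obtain: membership in \NP is immediate, since given a candidate housing $\pi$ of size $\R$ we can verify $|\pi|=\R$ and compute $\exc(\pi)=\sum_{h\in\I}\max\{0,|N_G(\assgn(h))\cap\pi|-\ub(h)\}$ in polynomial time, comparing it with $t$. For hardness, the natural route is to reduce from \textsc{Independent Set} on planar graphs of maximum degree three, exactly as in \Cref{thm:NPh}, and simply make the reduction ``absorb'' the budget $t$.

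The main step is the reduction. Given an instance $(G,k)$ of \textsc{Independent Set} together with the target slack $t$ of the \RARHshort instance we want to build, I would take the subdivision construction from the proof of \Cref{thm:NPh}: subdivide every edge $e_i$ of $G$ with a vertex $w_i$ occupied by an edge-guard $h_i$ with $\ub(h_i)=1$, set $V_U=V(G)$, and set $\R=k$. Now, to handle a nonzero $t$, I would pad the instance with $t$ independent ``gadgets'' each of which can contribute exactly one unit of excess essentially for free --- for instance, attach $t$ disjoint pendant copies of a tiny configuration (say an edge-guard with upper-bound $0$ adjacent to a single empty vertex, or a vertex-guard forced to absorb one extra refugee), together with $t$ extra refugees to house there, so that the only way to place those extra refugees pushes exactly $t$ units of excess and no more, while the rest of the instance must still realise an inhabitants-respecting (i.e. zero-excess) housing on the \textsc{Independent Set} part. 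The correctness argument then mirrors \Cref{thm:NPh}: a size-$k$ independent set yields a housing whose total excess is exactly $t$ (all of it coming from the padding gadgets), and conversely, since the padding already consumes the entire budget $t$, any valid $t$-relaxed housing must induce zero excess on the subdivided copy of $G$, forcing the chosen empty vertices of $V(G)$ to form an independent set of size $k$.

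An even cleaner alternative, which I would probably prefer to write, avoids padding altogether: for the special case $t=0$ the statement is literally \Cref{thm:NPh} since $0$-relaxed inhabitants-respecting coincides with inhabitants-respecting, so it suffices to give, for each fixed $t\geq 1$, a polynomial reduction that bakes the budget into the construction; but since the theorem quantifies over ``any $t\geq 0$'' with $t$ part of the input, the padding approach handles all values uniformly in one reduction. The bookkeeping I would do carefully is (a) checking that the padding gadgets cannot be exploited to ``hide'' excess coming from the main part --- this is why each gadget must be forced to absorb \emph{exactly} one unit, with its empty vertices disjoint from $V(G)$ and its guard's neighbourhood confined to the gadget --- and (b) confirming that $\R$ is adjusted to $k+t$ (or whatever the gadget count dictates) and that the whole construction is polynomial in $|G|+t$, which it is as long as $t$ is given in unary or is polynomially bounded; since $t$ is a numerical input we may assume it is at most $|\I|$ after trivial preprocessing (any instance with $t\geq\sum_h\deg(\assgn(h))$ is a \Yes-instance), so polynomiality is not an issue.

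The main obstacle, such as it is, is purely in designing the padding gadget so that it deterministically contributes exactly one unit of excess no matter how the solver behaves: it must be impossible to place fewer or more of the extra refugees there, and impossible to route an extra refugee from the gadget into the main part or vice versa. Using a component that is a \emph{separate connected component} of the topology for each gadget makes this clean --- e.g.\ a component consisting of one empty vertex adjacent to one guard with upper-bound $0$, contributing exactly $1$ to $\exc$ when its single refugee is placed (which it must be, to reach $\R$) --- so the excess decomposes additively over components and the argument goes through with essentially no new ideas beyond \Cref{thm:NPh}.
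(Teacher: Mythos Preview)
Your reduction is essentially right, but the correctness argument you sketch has a real gap. You assert that each padding gadget's empty vertex ``must be'' occupied ``to reach $\R$,'' and hence that the padding ``already consumes the entire budget $t$.'' This is false: the main part has $|V(G)|$ empty vertices, typically far more than $k$, so a solver is free to leave some gadgets empty and instead place $k+t-j$ refugees on a set $S\subseteq V(G)$ for some $j<t$. In that case the total excess is $j+|E(G[S])|$, and your argument gives no reason why this cannot be at most $t$ when $G$ has no independent set of size~$k$. (It \emph{is} true---because any graph on $|S|$ vertices with $|E(G[S])|$ edges contains an independent set of size $|S|-|E(G[S])|$, so $j+|E(G[S])|\le t$ with $|S|=k+t-j$ forces an independent set of size $\ge k$---but that is a separate lemma you never state, and without it the backward direction does not go through as written.)

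The paper sidesteps this issue entirely: instead of relying on padding to absorb the budget, it replaces every edge of $G$ by $t+1$ parallel subdivided edges, so that selecting both endpoints of \emph{any} original edge already incurs excess $t+1>t$. This forces the refugees placed in the main part to form an independent set outright, after which the $t$ pendant gadgets and the extra $t$ refugees work exactly as you intended. The amplification trick is the missing idea; with it, your ``padding consumes the whole budget'' claim becomes unnecessary and the correctness proof is immediate.
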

\begin{proof}
	The reduction is similar to the one used in the proof of \Cref{thm:NPh}. The only difference is that as our first step, we modify the graph~$G$ from the instance of the \textsc{Independent Set} problem such that we replace every edge~$e_i$,~$i\in[m]$, with~$t+1$ parallel edges~$e_i^1,\ldots,e_i^{t+1}$, and we add~$t$ additional paths on two vertices with one vertex occupied by an intolerant inhabitant. Finally, we increase the number of refugees to $\R = k + t$.
	
	The first modification ensures that the violation of the upper-bound cannot occur on the vertices of the graph~$G_1$, and~$t$ additional paths provide space for problematic refugees. The correctness and the running time remain completely the same.
\end{proof}

It should be noted that the construction used in the proof of the previous theorem does not produce a topology of constant degree; the maximum degree depends on the value of~$t$. However, it clearly shows that for small values of $t$, we cannot expect any stronger tractability result. However, some of the known algorithms can be modified to work for \RARHshort, such as the following.

\begin{theorem}
	The \RARH problem is in \XP when parameterised by the tree-width \tw{} of the topology.
\end{theorem}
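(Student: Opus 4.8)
\newcommand{\DP}{\ensuremath{\operatorname{DP}}}
	The plan is to extend the leaf-to-root dynamic programming of \Cref{thm:tw:XP} by one extra coordinate recording how much ``excess'' has already been accumulated. As there, one first computes a nice tree decomposition of width $\Oh{\tw}$ with the algorithm of Korhonen and Lokshtanov~\cite{KorhonenL2023} and assumes the root bag is empty, so that every inhabitant is forgotten at exactly one node. A convenient preprocessing observation is that a single inhabitant $h$ contributes at most $\deg(\assgn(h))$ to $\exc(\pi)$, hence $\exc(\pi)\le\sum_{h\in\I}\deg(\assgn(h))\le n^2$ for every housing $\pi$; thus if $t\ge n^2$ then every $\R$-sized housing is $t$-relaxed inhabitants-respecting, and this case, together with the trivially infeasible $\R>|V_U|$, is decided directly. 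From now on $t\le n^2$.

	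Each signature $(A,P,F,\rho)$ of \Cref{thm:tw:XP} is replaced by a quintuple $(A,P,F,\rho,\epsilon)$ with $\epsilon\in[t]_0$, and I would let $\DP_x[A,P,F,\rho,\epsilon]$ be \texttt{true} iff there is a partial housing $\pi_x\subseteq V^x\cap V_U$ consistent with $(A,P,F,\rho)$ exactly in the sense of \Cref{thm:tw:XP}, \emph{except} that the inhabitants assigned to bag vertices are no longer required to respect their upper-bounds, and such that the total excess already incurred by inhabitants assigned to \emph{forgotten} vertices of $V^x$ equals $\epsilon$. Since the root bag is empty, the instance is a \YesI iff $\DP_r[\emptyset,\emptyset,\emptyset,\R,\epsilon]$ is \texttt{true} for some $\epsilon\le t$.

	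The recurrences would be adapted from \Cref{thm:tw:XP} as follows. A leaf node additionally forces $\epsilon=0$. The introduce-empty-vertex and forget-empty-vertex nodes are unchanged, with $\epsilon$ simply threaded through every branch (and in the forget-empty node the same $\epsilon$ in both disjuncts). The introduce-occupied-vertex node keeps the rejection when $P(v)\neq 0$ but \emph{drops} the rejection ``$|A\cap N(v)|+F(v)>\ub(\assgn^{-1}(v))$'', since a bag inhabitant may now exceed its bound provided the excess is paid for later; $\epsilon$ is passed through. The only genuinely new rule sits at a forget-occupied-vertex node $x$ forgetting $v$: as in \Cref{thm:tw:XP}, the forget property guarantees that all neighbours of $v$ are already introduced, so in any consistent child configuration $F(v)=0$ and the number of refugees housed in $N(v)$ is $P(v)+|A\cap N(v)|$; hence this is the place to charge $v$'s excess $\delta_i=\max\{0,\,i+|A\cap N(v)|-\ub(\assgn^{-1}(v))\}$ when guessing $P(v)=i$, giving
	\[
		\DP_x[A,P,F,\rho,\epsilon]=\bigvee_{i\in[n]_0}\DP_y\bigl[A,\,P_{\uparrow(v\mapsto i)},\,F_{\uparrow(v\mapsto 0)},\,\rho,\,\epsilon-\delta_i\bigr],
	\]
	with a disjunct omitted whenever $\epsilon-\delta_i<0$. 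Finally, at a join node with children $y,z$ the forgotten vertices of the two subtrees overlap only in $\beta(x)$, which contains none of them, so the accumulated excess is additive: take the disjunction of \Cref{thm:tw:XP} and additionally range over $\epsilon_y,\epsilon_z$ with $\epsilon_y+\epsilon_z=\epsilon$.

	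Correctness would follow, exactly as in \Cref{thm:tw:XP}, by induction over the nodes of the decomposition, the only extra bookkeeping being that the $\epsilon$-component faithfully accumulates the finalised excesses. The step I expect to be the main obstacle---and the only one substantially different from the unrelaxed case---is the forget-occupied-vertex node: one must argue that every inhabitant is forgotten exactly once and that at its forget node $P(v)$ and $|A\cap N(v)|$ together count \emph{all} empty neighbours of $v$ that house refugees (this uses the forget property and $F(v)=0$), so that $\delta_i$ equals precisely $\exc_\pi(\assgn^{-1}(v))$ and every inhabitant's contribution to $\exc(\pi)$ is charged once and only once. For the running time, a node's table grows over \Cref{thm:tw:XP} by only the factor $|[t]_0|\le n^2+1$, so it still has $n^{\Oh{\tw}}$ cells, each filled in $n^{\Oh{\tw}}$ time with the join node as the bottleneck; the total running time is therefore $n^{\Oh{\tw}}$, which is in \XP.
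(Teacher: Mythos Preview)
Your proposal is correct and, in fact, slightly leaner than the route sketched in the paper. The paper's proof is only a sketch: it augments the signature of \Cref{thm:tw:XP} by \emph{two} coordinates, a function $\tau$ recording for every inhabitant currently in the bag by how much its upper-bound will be exceeded in the final solution, and a single number recording the excess already accumulated by forgotten inhabitants. Intuitively, the paper keeps the upper-bound check at the introduce-occupied-vertex node but relaxes it to $P(v)+F(v)+|A\cap N(v)|\le \ub(\assgn^{-1}(v))+\tau(v)$, and later, when $v$ is forgotten, moves $\tau(v)$ into the accumulated-excess counter.

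You instead \emph{drop} the upper-bound check at the introduce node entirely and settle the account only at the forget-occupied-vertex node, where $P(v)+|A\cap N(v)|$ already equals the full number of refugees in $N(v)$; this lets you get by with the single scalar $\epsilon$. The price is that $F$ becomes dead weight in your tables (once the inequality is removed, the stored truth value no longer depends on $F$), so you could have dropped it as well; keeping it is harmless for correctness and for the \XP bound. Conversely, the paper's $\tau$ index costs an additional $n^{\Oh{\tw}}$ factor in the table size, which your variant avoids, though of course this does not affect membership in \XP. Your preprocessing bound $t\le n^2$ and the additive split $\epsilon_y+\epsilon_z=\epsilon$ at join nodes are both sound for the reasons you give.
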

\begin{proof}[sketch]
	The algorithm is very similar to the one given in \Cref{thm:tw:XP}. We add only two more indices to the dynamic programming table. The first index, say $\tau$, is a function that represents how much the upper-bounds of the bag vertices will be exceeded in a hypothetical solution. The second index is a single number that represents how much we relaxed the upper-bounds of past vertices.
\end{proof}

The natural question is what happens with large values of~$t$. In the following proposition, we show that there exists a bound that guarantees the existence of a $t$-relaxed inhabitants-respecting housing. 

\begin{proposition}
	\label{lem:tARH:bound}
	Every instance of $t$-\RARHshort, where $t \geq |\I|\cdot\R$, admits a $t$-relaxed inhabitants-respecting refugee housing.
\end{proposition}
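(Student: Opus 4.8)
The plan is to observe that in the relaxed setting \emph{any} housing of the correct size works once $t$ is this large, so no cleverness in choosing $\pi$ is needed. First I would note that a housing of size $\R$ exists at all: by the definition of a topology we have $|V| \geq |\I| + \R$, hence $|V_U| = |V| - |V_\I| = |V| - |\I| \geq \R$, so we may pick an arbitrary $\pi \subseteq V_U$ with $|\pi| = \R$.

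Next I would bound the excess of a single inhabitant. For every $h \in \I$ we have $|N_G(\assgn(h)) \cap \pi| \leq |\pi| = \R$, simply because the intersection with $\pi$ cannot contain more elements than $\pi$ itself. Therefore $\exc_\pi(h) = \max\{0, |N_G(\assgn(h)) \cap \pi| - \ub(h)\} \leq |N_G(\assgn(h)) \cap \pi| \leq \R$.

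Finally I would sum over all inhabitants: $\exc(\pi) = \sum_{h \in \I} \exc_\pi(h) \leq |\I| \cdot \R \leq t$ by the hypothesis $t \geq |\I| \cdot \R$. Hence $\pi$ is a $t$-relaxed inhabitants-respecting housing of size $\R$, and the instance is a \YesI.

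There is essentially no obstacle here; the only point that deserves a sentence is the existence of some size-$\R$ housing, which is immediate from the cardinality condition in the definition of a topology. One could optionally remark that the bound is not tight in general (e.g.\ the per-inhabitant excess is really bounded by $\deg(\assgn(h)) - \ub(h)$, and many neighbourhoods overlap), but for the stated guarantee the crude estimate above suffices.
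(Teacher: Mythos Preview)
Your proof is correct and follows exactly the same approach as the paper: take any size-$\R$ housing, bound each inhabitant's excess by $\R$, and sum to get $\exc(\pi)\leq |\I|\cdot\R\leq t$. You add the small justification that a size-$\R$ housing exists (via $|V_U|\geq\R$), which the paper leaves implicit but is a welcome clarification.
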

\begin{proof}
	Let $\pi$ be an arbitrary housing of refugees on empty vertices of the topology. Every inhabitant $h$ can have at most $\R$ refugees housed in her neighbourhood and therefore $\exc(h) \leq R$. As there are $|\I|$ inhabitants in total, setting $t = |\I|\cdot\R$ always leads to a $t$-relaxed inhabitants-respecting housing.
\end{proof}

Surprisingly, the trivial bound given in \Cref{lem:tARH:bound} is tight in the sense that there are instances where $t=|\I|\cdot\R$ is necessary for the existence. Let the topology be a complete bipartite graph $G=(A\cup B,E)$, where $|A|=|\I|$ and $|B|=\R$, all inhabitants are assigned to $A$, and for every $h\in\I$ let $\ub(h) = 0$. As we are allowed to assign refugees only to $B$ and each inhabitant is neighbour of every vertex $v\in B$, there is no $t$-relaxed inhabitants-respecting refugee housing for any $t < |\I|\cdot\R$.

The question that arises is whether the complexity of the problem changes for large, but still smaller than the trivial guarantee given in \Cref{lem:tARH:bound}, values of $t$. We answer this question positively in our last result.

\begin{theorem}
	\label{thm:tARH:belowBound}
	There is an \XP algorithm with respect to~$q$ deciding whether a given \RARHshort instance admits a $(|\I|\cdot\R - q)$-relaxed inhabitants-respecting refugee housing.
\end{theorem}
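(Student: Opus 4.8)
The plan is to rephrase the question in terms of the total \emph{slack} of a housing. For a housing $\pi\subseteq V_U$ of size $\R$ set
$\operatorname{slack}(\pi)=\sum_{h\in\I}\bigl(\R-\exc_\pi(h)\bigr)=|\I|\cdot\R-\exc(\pi)$; since every $\exc_\pi(h)\le\R$, each summand is nonnegative, and the instance is a \YesI precisely when some size-$\R$ housing has $\operatorname{slack}(\pi)\ge q$ (at least one size-$\R$ housing exists because $|V_U|\ge\R$). The first observation I would make is that any inhabitant $h$ with $\ub(h)\ge1$ contributes at least $1$ to the slack of \emph{every} size-$\R$ housing: since $|N_G(\assgn(h))\cap\pi|\le\R$ and $\ub(h)\ge1$, we get $\exc_\pi(h)\le\R-1$. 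Hence, writing $\I_+=\{h\in\I\mid\ub(h)\ge1\}$, if $|\I_+|\ge q$ then every size-$\R$ housing is a solution and we output \Yes immediately. From now on assume $|\I_+|\le q-1$, and let $\I_0=\I\setminus\I_+$ be the intolerant inhabitants; note $\exc_\pi(h)=|N_G(\assgn(h))\cap\pi|$ for $h\in\I_0$.

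Next I would exploit that there are now fewer than $q$ non-intolerant inhabitants by bucketing the empty vertices. Partition $V_U$ into the at most $2^{|\I_+|}\le 2^{q-1}$ \emph{cells}, where the cell of $v\in V_U$ is the set $\{h\in\I_+\mid v\in N_G(\assgn(h))\}$. For a housing $\pi$ put $k_c=|\pi\cap c|$ for each cell $c$, write $d_0(v)$ for the number of intolerant inhabitants adjacent to $v\in V_U$, and note that $p_h:=|N_G(\assgn(h))\cap\pi|=\sum_{c\ni h}k_c$ for every $h\in\I_+$. A direct expansion of the slack then yields
\[
\operatorname{slack}(\pi)=\sum_{h\in\I_+}\bigl(\R-\max\{0,p_h-\ub(h)\}\bigr)+|\I_0|\cdot\R-\sum_{v\in\pi}d_0(v),
\]
using $\sum_{h\in\I_0}|N_G(\assgn(h))\cap\pi|=\sum_{v\in\pi}d_0(v)$. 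Thus $\operatorname{slack}(\pi)$ depends on $\pi$ only through the cell-count vector $(k_c)$ together with the single quantity $\sum_{v\in\pi}d_0(v)$; and for a fixed vector $(k_c)$ the first two terms are already determined, while the last is minimised — hence the slack maximised — by taking in each cell $c$, independently, the $k_c$ vertices of smallest $d_0$-value.

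The algorithm is therefore to enumerate all vectors $(k_c)$ with $0\le k_c\le|c|$ for each cell $c$ and $\sum_c k_c=\R$ — there are at most $(n+1)^{2^{q-1}}$ of them — and for each such vector compute, via the displayed identity and the per-cell greedy choice above, the maximum slack attainable by a housing realising that vector (after a one-time sorting of each cell by $d_0$-value this is polynomial); we answer \Yes iff some vector yields slack at least $q$, and \No otherwise. Correctness in this branch follows because every size-$\R$ housing realises one of the enumerated vectors, the greedy choice is optimal within each cell, and the slack of the constructed housing is at least that of any housing with the same cell-count vector; together with the trivial \Yes branch for $|\I_+|\ge q$ this decides the instance. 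The running time is $(n+1)^{2^{q-1}}\cdot n^{\Oh{1}}=n^{\Oh{2^q}}$, which is an \XP bound in $q$.

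The step I expect to need the most care is the case split together with the aggregation of the intolerant inhabitants: one must argue cleanly that the (potentially enormous) set $\I_0$ affects the slack only through the scalar $\sum_{v\in\pi}d_0(v)$, so that it suffices to keep track of the few inhabitants with positive upper-bound individually, and that the number of the latter may be assumed smaller than $q$ without loss of generality (otherwise the instance is trivially positive). Verifying the slack identity and the optimality of the per-cell greedy choice is then routine, and no component of the argument needs more than elementary counting.
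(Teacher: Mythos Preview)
Your proof is correct but follows a genuinely different route from the paper. The paper's argument observes that if some housing has slack at least $q$, then (since each inhabitant's slack is a nonnegative integer) a subset $S$ of at most $q$ inhabitants already witnesses slack $\ge q$; it therefore guesses $S\subseteq\I$ with $|S|\le q$, guesses for each $s\in S$ a target value $\operatorname{eff}(s)\in[\R-q,\R-1]$ for $\exc_\pi(s)$, deletes all other inhabitants, raises the upper-bound of each $s$ to $\ub(s)+\operatorname{eff}(s)$, and decides the resulting \ARHshort instance with at most $q$ inhabitants via the ILP-based \FPT algorithm of \Cref{lem:ARH:FPT:I}. Your approach instead case-splits on $|\I_+|$: if $|\I_+|\ge q$ every housing wins, and otherwise you bucket $V_U$ by adjacency to the at most $q-1$ non-intolerant inhabitants and aggregate the intolerant ones through $d_0$, reducing the problem to an explicit enumeration over cell-count vectors with a per-cell greedy choice. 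Your argument is more elementary and fully self-contained (no ILP black box), and the aggregation of $\I_0$ via $\sum_{v\in\pi}d_0(v)$ is a clean insight; the paper's approach, on the other hand, yields the sharper running time $n^{\Oh{q}}$ versus your $n^{\Oh{2^q}}$, since it guesses only $q$ inhabitants rather than enumerating vectors indexed by $2^{q-1}$ cells. Both establish the claimed \XP membership.
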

\begin{proof}
	As the first step of our algorithm, we guess a $q'\leq q$-sized set $S=\{s_1,\ldots,s_{q'}\}$ of inhabitants such that for each $s_i$, $i\in[p']$, we have $\exc(s_i) < \R$. Next, for every $s_i\in S$, we guess a number $\operatorname{eff}(s) \in [\R-q,\R-1]$ that represents the effective value of $\exc$ in the solution. Finally, we remove all inhabitants $h\in \I\setminus S$ together with their vertices, set $\ub'(s) = \ub(s) + \operatorname{eff}(s)$ for every $s\in S$, and solve the reduced instance, which now has at most $q$ inhabitants, using the \FPT algorithm from \Cref{lem:ARH:FPT:I}. The total running time is $n^\Oh{q} \cdot q^\Oh{q}$ for guessing and \FPT for solving, which is indeed in \XP.
\end{proof}

To the best of our knowledge, \Cref{thm:tARH:belowBound} is one of the very few below-guarantee parameterisations in AI and computational social choice -- despite the fact that computational social choice problems lie at the very foundation of this research direction~\cite{FernauFLMPS2014,MahajanRS2009}.

\section{Conclusions}

We initiate the study of restricted preferences in the context of the refugee housing problem. Our main motivation is that the preferences studied in the original setting are not well-aligned with real-life scenarios, and, moreover, these properties were exploited to give intractability results in~\cite{KnopS2023}. In our model, we not only introduce tractable algorithms, but also unveil intractability results, shedding light on the complexity landscape of the problem. 

For a complete understanding of the computational picture of \ARHshort, we still miss some pieces. In particular, in \Cref{thm:remaining:XP,thm:tARH:belowBound} we give only \XP algorithms. The natural question then is whether there exist \FPT algorithms for these settings. Also, our selection of structural restrictions is done mostly from the theorists' perspective; they are either widely studied (tree-width) or naturally emerge from the model's properties (dense parameters). The only exception is the feedback-edge set number. Therefore, we see great potential in the analysis of real-life topologies and the investigation of their structural properties.

We also recognise the need to explore more intricate models of refugee housing that capture the nuances of strategic behaviour and agent interactions. Inspired by Schelling games~\cite{AgarwalEGISV2021}, we see the potential to uncover novel dynamics by allowing agents to swap positions or jump to empty houses. This direction holds promise for deepening our understanding of real-life scenarios and the complex decisions faced by displaced individuals.

\bibliographystyle{splncs04}
\bibliography{references}

\end{document}